\def\ps@headings{
\def\@oddhead{\mbox{}\scriptsize\rightmark \hfil \thepage}
\def\@evenhead{\scriptsize\thepage \hfil \leftmark\mbox{}}
\def\@oddfoot{}
\def\@evenfoot{}}
\newtheorem{thm}{Theorem}[section]
\newtheorem{lem}[thm]{Lemma}
\newtheorem{pro}[thm]{Proposition}
\begin{document}
\title{Optimal Radio Resource Allocation for Hybrid Traffic in Cellular Networks: Traffic Analysis and Implementation}
\author{Mo~Ghorbanzadeh,~\IEEEmembership{}
        Ahmed~Abdelhadi,~\IEEEmembership{}
        Charles~Clancy~\IEEEmembership{}
\thanks{M. Ghorbanzadeh, A. Abdelhadi, and C. Clancy are with the Hume Center for National Security and Technology, Virginia Tech, Arlington,
VA, 22203 USA e-mail: \{mgh, aabdelhadi,tcc\}@vt.edu.

Part of this work was accepted at IEEE ICNC CNC Workshop 2015 \cite{GhorbanzadehCNC2015_1}.}
}

\maketitle

\begin{abstract}
In part I of this paper, a distributed and a centralized architecture for an optimal radio resource allocation aware of the traffic delay-tolerance nature, user subscription type, and application usage variations were developed. In the current article, a transmission overhead analysis of the aforementioned distributed and a centralized architectures is investigated and it is proved that the centralized scheme endures a significantly lower transmission overhead than does the distributed approach. Furthermore, the lower bounds of the transmission overhead for both the centralized and the distributed architectures are derived. Moreover, a sensitivity analysis of the resource allocation procedures of the aforesaid centralized and distributed architectures to the changes in the number of users in the system is presented. Besides, a sensitivity analysis of the centralized and distributed approaches to the temporal changes in application usages are investigated. Ultimately, the transmission overhead and sensitivity 
relevant statements are verified through appropriate simulations. And last but not the least, a real-world implementation of the resource allocation methods developed in Part I is provided.
\end{abstract}

\begin{keywords}
Utility function, hybrid traffic, convex optimization, centralized algorithm, distributed algorithm, optimal resource allocation, implementation\end{keywords}

\providelength{\AxesLineWidth}       \setlength{\AxesLineWidth}{0.5pt}%
\providelength{\plotwidth}           \setlength{\plotwidth}{8cm}
\providelength{\LineWidth}           \setlength{\LineWidth}{0.7pt}%
\providelength{\MarkerSize}          \setlength{\MarkerSize}{3pt}%
\newrgbcolor{GridColor}{0.8 0.8 0.8}%
\newrgbcolor{GridColor2}{0.5 0.5 0.5}%

\section{Introduction}\label{sec:intro}
Resource allocation methods aiming at fulfilling the quality of service (QoS) requirements of current-day cellular networks have been the focus of many research studies. In fact, cellular communication systems nowadays contain smartphones capable of running several applications simultaneously. Since the applications have miscellaneous QoS criteria based on the type of the traffic that they deal with, many of the resource allocation techniques have to account for the application types in order to incorporate QoS into their operation. Furthermore, the diversity of the users has to be included in the resource allocation methods implemented in networks catering to a wide variety of subscribers. Besides, concurrent running of applications on the smartphones motivates incorporating the application usage temporal changes into the resource allocation schemes. Much of the state of the art in single carrier resource allocation (\cite{Lee05non-convexoptimization,AbdelhadiCNC2014, AbdelhadiPIMRC2013, 
AbdelhadiMobicom2013}) considers either traffic or subscriber types into their formulation. However, some of these work fail to provide optimality for the allocated rates or do not account for all the aforementioned QoS-related issues at the same time. Many other multi carrier resource allocation studies (\cite{ShajaiahICNC2014,ShajaiahMILCOM2013,ShajaiahPIMRC2014,ShajaiahCCS2014}) also have been done, which do not address all the aforesaid QoS-related concerns.

A single carrier optimal radio resource allocation with utility proportional fairness formulation was presented in part I \cite{GhorbanzadehPart2} of the article at hand. The aforementioned formulation was cast into a centralized and a distributed architecture in the part I of this paper \cite{GhorbanzadehPart2}, and also included algorithms to solve them. The current article is part II of the resource allocation paper presented in \cite{GhorbanzadehPart2} and tries to address the QoS-related issues mentioned above with regard to the method presented in \cite{GhorbanzadehPart2}. Hereafter, a referring to "part II" indicates part II of the resource allocation paper \cite{GhorbanzadehPart2}, referred to as "part I", unless otherwise is explicitly stated. Part II first presents a traffic analysis of the centralized and distributed resource allocation architectures that were introduced in part I.

The traffic analysis includes a transmission overhead analysis of the aforesaid architectures and obtains lower bounds for the transmission overheads. Furthermore, part II contains a sensitivity analysis of the centralized and distributed architectures presented in the part I \cite{GhorbanzadehPart2} of the current article. The sensitivity analysis studies whether the user equipment (UE) bids and their allocated rates remain optimal when the number of UEs or their application usage percentages change in the system. The sensitivity analysis also investigates the variations of the transmission overhead of the centralized and distributed architectures, developed in part I \cite{GhorbanzadehPart2}, when the UE quantity or application usage percentages change and under the existence or lack of a rebidding procedure in response to the aforesaid changes. Then, simulations are done to confirm to-be-derived predicates relevant to the transmission overhead and sensitivity analyses. The traffic and sensitivity analysis 
deductions, applicable to part I, can be applied to the resource block allocation formulations in (\cite{Erpek2015,GhorbanzadehICNC2015}) as well as to the radar-spectrum shared resource allocation work in \cite{GhorbanzadehMILCOM2014}. Ultimately, a real-world network is developed on which the resource allocation algorithms in part I is implemented. The implementation results reveal that applying the algorithms improves the QoS for the applications. The current article is intended to complement the part I of the paper.

\subsection{Contributions}
The contributions in the current paper are summarized below.

\begin{itemize}
\item We analyze the sensitivity of the distributed and centralized resource allocations developed in part I to the variations in the number of UEs in the system and to changes in the applications usage percentages.
\item We investigate the transmission overhead associated with the distributed and centralized resource allocations proposed in Part I and obtain the lower bounds.
\item We provide with relevant simulations for the aforementioned sensitivity and transmission overhead analyses.
\item We implement the resource allocations developed in part I on a real-world network and show the effectiveness of the proposed method in QoS elevation.
\end{itemize}

\subsection{Organization}\label{sec:organization}
The remainder of the current paper proceeds as follows. Section \ref{sec:Background} presents the necessary background information from part I to facilitate understanding the part II without frequent referring to the part I. Section \ref{sec:QnDn} presents the traffic analysis of the centralized and distributed resource allocations to the UE quantity variations. Section \ref{sec:SensitivityUsage} portrays a traffic analysis of the proposed schemes to changes of the application usage percentages. Section \ref{sec:complexity} provides with a brief discussion on computational complexity of the centralized and distributed resource allocation methods developed in part I. Section \ref{sec:sim} presents the simulation results relevant to the transmission overhead and sensitivity of the proposed architectures. Ultimately, section \ref{Sec:Implementation} presents a real-world network implementation of the resource allocation; and, section \ref{sec:conclude} concludes the paper.

In the next section, we present the preliminary information, extracted from part I of this article,in an effort to minimize the readers' need to refer to part I in order to understand part II.

\section{Background}\label{sec:Background}
We consider a cellular communications systems with $M$ UEs served by an Evolved Node-B (eNB) in Figure \ref{fig:system_model}. The UEs run real-time and delay-tolerant applications which are represented by utility functions. The utility functions $U(r)$ represent applications QoS satisfaction in terms of the allocated rates $r$ and have the following properties \cite{Shenker95fundamentaldesign}.

\begin{figure}[!htb]
\begin{center}
\includegraphics[width=3.5in]{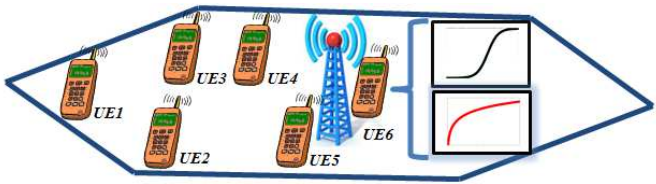}
\end{center}
\caption{\footnotesize{System Model: Single cell, within the cellular network, with an eNB covering $M = 6$ UEs each with simultaneously running delay-tolerant and relay-time applications represented by logarithmic and sigmoidal utility functions respectively.}}\label{fig:system_model}
\end{figure}

\begin{itemize}
\item $U(0) = 0$ and $U(r)$ is an increasing function $r$ .
\item $U(r)$ is twice differentiable in $r$ and bounded above.
\end{itemize}

The first statement of the the first predicate in the aforementioned properties implies that the utility functions are nonnegative. This is expected since utility functions represent application QoS satisfaction percentage. The second statement of the first predicate explains that when more rate is assigned to the applications, they obtain a higher QoS satisfaction. The second predicate indicates that the utility functions are continuous. The delay-tolerant and real-time applications are respectively modelled by normalized logarithmic and sigmoidal utility functions represented in equations (\ref{eqn:sigmoid}) and (\ref{eqn:log}) in that order \cite{DL_PowerAllocation}.

\begin{equation}\label{eqn:sigmoid}
    U(r) = c\Big(\frac{1}{1+e^{-a(r-b)}}-d\Big)
\end{equation}

Here, $c = \frac{1+e^{ab}}{e^{ab}}$ and $d = \frac{1}{1+e^{ab}}$. It can be easily verified that $U(0) = 0$ and $U(\infty) = 1$, where the former is one of the previously mentioned utility function properties and the latter indicates that assigning infinite resources ensues $100\%$ QoS satisfaction. Furthermore, it is easily derivable that the inflection point of the sigmoidal utility function in equation (\ref{eqn:sigmoid}) occurs at $r = r^{\text{inf}} = b$, where the superscript "inf" stands for infliction. In equation (\ref{eqn:log}), $r^{\text{max}}$ is the maximum rate at which the application QoS achieves $100\%$ utility satisfaction percentage and $k$ is the utility function increase with augmenting the allocated rate $r$. It can be easily verrified that $U(0) = 0$ and $U(r^{\text{max}}) = 1$., where the former is the property of the utility functions and the latter implies that $r = r^{max}$ achieves  a $100\%$ QoS satisfaction. Moreover, the inflection point of normalized logarithmic function is at 
$r = r^{\text{inf}} = 0$.

\begin{equation}\label{eqn:log}
    U(r) = \frac{\log(1+kr)}{\log(1+kr^{\text{max}})}
\end{equation}

The objective is to determine the optimal rates that the eNBs should be allocating to their UE applications to ensure that Real-time applications are rendered priority over delay-tolerant ones, no user is dropped, applications temporal usage variations are accounted for, and subscription-based treatments are honored. We assume that UEs run multiple real-time and delay-tolerant applications, mathematically represented by sigmoidal and logarithmic utility functions. To obtain the optimal rates and fulfill the above requirements, part I of the paper developed the resource allocation algorithms cast under a centralized architecture, which is displayed again for the readers' convenience.

\subsection{Centralized Resource Allocation}\label{sec:Centralized}
Here, the application rates are directly assigned by the eNBs in a singular stage. The formulation of the centralized algorithm is illustrated in equation (\ref{eqn:opt_multiapp}). Here, $M$ UEs are covered by an eNB and $\textbf{r} = [r_1,r_2,...,r_M]$ is the UE allocated rate vector, whose $i^{th}$ element is the rate allocated to the $i^{th}$ UE. Besides, $r_{ij}$, $U_{ij}(r_{ij})$, and $\alpha_{ij}$ respectively represent the rate allocation, application utility function, and application usage percentage of the $j^{th}$ application running on the $i^{th}$ UE. Hence, we can write $\sum_{j=1}^{N_i}{\alpha_{ij}} = 1$ which states that the addition of the usage percentage of the application running on the $i^{th}$ UE is $100\%$. Furthermore, we can write $r_i = \sum_{j=1}^{N_i}{r_{ij}}$ where $N_i$ is the number of coevally running applications on the $i^{th}$ UE. The former of the afore-written equations , and the letter one implies that the the $i^{th}$ UE rate is the augmentation of all its $N_i$ 
applications resources assignments.Also, $R$ is the maximum bandwidth available to the eNB and $\beta_i$ is a subscription-dependent weight for the $i^{th}$ UE. Regarding the centralized architecture, part I of this article does the following:

\begin{itemize}
\item Part I proves that the resource allocation in equation (\ref{eqn:opt_multiapp}) is convex and has a tractable solution.
\item Part I proves that the resource allocation in equation (\ref{eqn:opt_multiapp}) refrains from dropping UEs, prioritizes real-time applications, considers application usage variations, and accounts for UE priorities.
\item Part I provides with algorithms, implemented in the UEs and eNBs, to obtain the optimal application rates.
\end{itemize}

\begin{equation}\label{eqn:opt_multiapp}
\begin{aligned}
& \underset{\textbf{r}}{\text{max}}
& & \prod_{i=1}^{M}\Big(\prod_{j=1}^{N_i}U_{ij}^{\alpha_{ij}}(r_{ij})\Big)^{\beta_{i}} \\
& \text{subject to}
& & \sum_{i=1}^{M}\sum_{j=1}^{N_i}r_{ij} \leq R,\\
& & &  r_{ij} \geq 0, \;\; i = 1,2, ...,M,\;\;j = 1,2,...,N_i\\
\end{aligned}
\end{equation}

More details about the solution algorithms for the distributed architecture can be found in part I. In addition, part I of the paper developed the resource allocation algorithms cast under a distributed architecture, which is displayed below for the readers' convenience.

\subsection{Distributed Resource Allocation}\label{sec:Distributed}
In the distributed architecture, there are two optimization problems, the first of which concerns the UE optimal rates allocated by the eNBs via collaborations between the eNBs and the UEs that they are covering. This stage is referred to as external UE resource allocation (EURA) and is written as equation (\ref{eqn:opt_sub1}), where for $M$ UEs $V_i(r_i) = \prod_{j=1}^{N_i}U_{ij}^{\alpha_{ij}}(r_{ij})$ is the aggregated utility function for the $i^{th}$ UE, $\textbf{r} = [r_1,r_2,...,r_M]$ is the UE rate vector whose $i^{th}$ component represents the rate assigned by the eNB to the $i^{th}$ UE. The second optimization focuses on distributing the application rates by their host UEs and is performed internally to the UEs. This stage is denoted as the internal UE rate allocation (IURA) and is written in equation (\ref{eqn:opt_sub2}). Here, $\textbf{r}_i = [r_{i1},r_{i2},...,r_{iN_i}]$ is the application rate allocation vector such that its $j^{th}$ component indicates the bandwidth allotted by the $i^{th}$ UE 
to its $j^{th}$ application, $r_{i}^{\text{opt}}$ is the $i^{th}$ UE rate allocated by eNB via solving the EURA optimization in equation (\ref{eqn:opt_sub1}), and $N_i$ is the number of applications running in the $i^{th}$ UE. Superscript "opt" indicates the optimality of the UE rates which was proved in part I \cite{GhorbanzadehPart2}, which also proved that there exists a tractable global optimal solution to the IURA optimization problem in equation (\ref{eqn:opt_sub2}) and also provided the solving algorithm thereof. In summary, regarding the distributed architecture, part I of the article does the following:

\begin{itemize}
\item Part I proves that the EURA and IURA in equations (\ref{eqn:opt_sub1}) and (\ref{eqn:opt_sub2}) are convex and have tractable solutions.
\item Part I proves that the EURA and IURA in equations (\ref{eqn:opt_sub1}) and (\ref{eqn:opt_sub2}) refrain from dropping UEs, IURA prioritize real-time applications, IURA considers application usage variations, and EURA accounts for UE priorities.
\item Part I provides with algorithms, implemented in the UEs and eNBs, to obtain the optimal application rates.
\end{itemize}

\begin{equation}\label{eqn:opt_sub1}
\begin{aligned}
& \underset{\textbf{r}}{\text{max}}
& & \prod_{i=1}^{M}V_i^{\beta_{i}}(r_{i}) \\
& \text{subject to}
& & \sum_{i=1}^{M}r_{i} \leq R,\\
& & &  r_{i} \geq 0, \;\;\;\;\; i = 1,2, ...,M.
\end{aligned}
\end{equation}

\begin{equation}\label{eqn:opt_sub2}
\begin{aligned}
& \underset{\textbf{r}_i}{\text{max}}
& & \prod_{j=1}^{N_i}U_{ij}^{\alpha_{ij}}(r_{ij}) \\
& \text{subject to}
& & \sum_{j=1}^{N_i}r_{ij} \leq r_{i}^{\text{opt}},\\
& & & r_{ij} \geq 0, \;\;\;\;\; j = 1,2, ...,N_i.
\end{aligned}
\end{equation}

More details about the solution algorithms for the distributed architecture can be found in part I.

\section{Traffic Analysis under UE Quantity Dynamics}\label{sec:QnDn}
We consider a cell of $M$ UEs and an eNB as in part I. One iterate of the centralized resource allocation algorithm assigns optimal rates to applications from the eNB to UEs, whereas the distributed approach requires more iterations thereof. For the centralized scheme, the following proposition \ref{pro:overhead} is conceivable.

\begin{pro}\label{pro:overhead}
The centralized resource allocation assigns optimal rates with a minimum transmission overhead $2M$.
\end{pro}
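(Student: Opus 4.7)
The plan is to bound the message count separately in each direction (uplink UE$\to$eNB and downlink eNB$\to$UE) and then add the two bounds. I would begin by recalling that, in the centralized architecture of Section~\ref{sec:Centralized}, the eNB solves problem~(\ref{eqn:opt_multiapp}) in a single stage; every piece of user-specific data that enters the optimization (the utility functions $U_{ij}$, the usage weights $\alpha_{ij}$, and the subscription weight $\beta_i$) is private to the $i$-th UE and is therefore unknown to the eNB a~priori. Consequently, for the optimization problem to even be formed at the eNB, each of the $M$ UEs must transmit at least one message containing its parameters; this yields an uplink cost of at least $M$ transmissions.

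Next I would handle the downlink direction. Once (\ref{eqn:opt_multiapp}) has been solved, the eNB possesses the full optimal rate vector $\mathbf{r}^{\text{opt}}$, but each UE must still be notified of the rates that it (and its applications) are authorized to use; since the $M$ UEs are distinct physical endpoints, the eNB must issue at least $M$ separate downlink messages, one per UE. Combining the two directions gives a total lower bound of $2M$.

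Finally, I would show that this bound is tight by exhibiting a protocol that achieves it: each UE sends its parameter packet exactly once, the eNB executes the (single-stage) centralized algorithm from part~I, and then transmits exactly one rate assignment back to each UE. Because part~I already proves that this single round suffices to obtain the optimal $\mathbf{r}^{\text{opt}}$, the achievability side requires no additional work, and the minimum transmission overhead is therefore exactly $2M$.

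The only genuinely delicate step is the uplink lower bound: one must rule out the possibility that the eNB could infer a UE's utility, usage percentages, or subscription weight from context rather than from an explicit message. I would dispose of this by appealing to the modeling assumption that $\alpha_{ij}$ varies with time-dependent application usage and that $\beta_i$ depends on the subscription state, so these quantities are not standing knowledge at the eNB and must be communicated afresh for each allocation instance; with that observation the $M$ lower bound on uplink messages is immediate, and the proposition follows.
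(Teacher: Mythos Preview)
Your argument is correct and follows essentially the same route as the paper: count $M$ uplink transmissions of utility parameters plus $M$ downlink transmissions of the optimal rates, conclude $2M$, and invoke the optimality result from Part~I (Corollary~III.4 there) for the single-round centralized algorithm. The only difference is that you spell out the minimality claim more carefully by treating the two directions separately and arguing that the per-UE parameters cannot be inferred without an explicit message, whereas the paper simply asserts that $2M$ is ``the minimum possible transmissions to permit initiating connections and assigning rates for the $M$ UEs.''
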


\begin{proof}
Corollary III.4 in part I proved the optimality of the centralized algorithm, so the first predicament of the proposition \ref{pro:overhead} is substantiated as well. The centralized algorithm repetitions too assign an application rate is one; as such $M$ transmissions of the application utility parameters from $M$ UEs to their eNB proceeds with $M$ one-to-one transmissions of the optimal rates from the eNB to the $M$ UEs. Hence, the transmission overhead of the centralized approach is $2M$. This is also the minimum possible transmissions to permit initiating connections and assigning rates for the $M$ UEs in the system.
\end{proof}

Lets consider a scenario where the number of UEs changes from $M_1$ at time slot $n = n_1$ to $M_2$ $n = n_1+1$. Therefore, the instantaneous UE quantity $M(n)$ can be written as equation (\ref{eqn:UECg}).

\begin{equation}\label{eqn:UECg}
M(n) = \left\{
  \begin{array}{l l}
    M_1\:\: ; \:\:n\leq n_1\\
    M_2\:\: ; \:\:n>n_1
  \end{array} \right.
\end{equation}

We assume that steady state rates are reached at the time slots $n_1$ and $n_2$ for the $M_1$ and $M_2$ users respectively so that the steady state rate vector $\textbf{r}^{\text{ss}}$ can be expressed as equation (\ref{eqn:RtCg}).

\begin{equation}
\label{eqn:RtCg}
\textbf{r}^{\text{ss}} = \left\{
  \begin{array}{l l}
    \textbf{r}(n_1)\:\: ; \:\: n\leq n_1,\\
    \textbf{r}(n_2)\:\: ; \:\: n > n_1.
  \end{array} \right.
\end{equation}

Based on equations (\ref{eqn:UECg}) and (\ref{eqn:RtCg}), we present a traffic analysis of the distributed algorithm with rebidding in section \ref{sec:SensitivityRebid}.

\subsubsection{Distributed Algorithm with Rebidding}\label{sec:SensitivityRebid}
For every UEs entering/leaving the cell, we assume that prior UEs (i.e. users already in the cell and having been allotted optimal rates) will rebid for resources as the system has changed. In this case, proposition \ref{pro:new_user_rebidding_overhead} is considerable with regard to the transmission overhead.

\begin{pro}\label{pro:new_user_rebidding_overhead}
The transmission overhead of the distributed resource allocation algorithm with rebidding is maximum and greater than $2M_2+2-M_1$.
\end{pro}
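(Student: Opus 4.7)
The proposition makes two separable claims about the rebidding variant, and I would organize the proof accordingly. The plan is to first establish that rebidding yields the maximum overhead among the natural distributed variants, and then to produce the lower bound $2M_2+2-M_1$ by a direct transmission count traced along the timeline of equation~\ref{eqn:UECg}.

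For the maximality claim I would compare the rebidding scheme against the only other reasonable distributed variant, in which only the $M_2-M_1$ newcomers (or the departing UEs in the case $M_2<M_1$) exchange bids and allocations with the eNB while incumbents retain their already-optimal rates from the previous steady state. Rebidding strictly adds back every incumbent's bid-up and rate-down exchange on top of the newcomer traffic, so it coordinate-wise upper-bounds any non-rebidding variant; this monotonicity supplies the \emph{maximum} assertion.

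For the lower bound I would count the unavoidable transmissions phase by phase. During $n\le n_1$ the system carries $M_1$ UEs and, by the same one-bid-one-allocation argument used in Proposition~\ref{pro:overhead}, at least $2M_1$ transmissions have already occurred. At $n=n_1+1$ the membership change itself must be announced to the eNB through at least a two-message handshake (arrival/departure notice plus acknowledgement that triggers the rebid). Once rebidding begins, each of the $M_2$ remaining UEs issues a fresh bid upward and receives a fresh allocation downward, contributing $\ge 2M_2$ further transmissions. Summing the three contributions gives $\ge 2M_1+2M_2+2$, and since $M_1\ge 1$ this already exceeds $2M_2+2-M_1$, which is the stated bound.

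The main obstacle, as I anticipate it, is to justify the per-phase minimum counts honestly in view of the iterative nature of the distributed EURA/IURA loop of equations~\ref{eqn:opt_sub1}--\ref{eqn:opt_sub2}: that loop is a subgradient-style iteration rather than a one-shot exchange, so one must argue the counts survive even the fastest conceivable convergence. I would handle this by observing that even a single pass of the loop already requires one bid upward and one allocation downward per participating UE, so the at-minimum tallies hold irrespective of iteration depth. A secondary subtlety worth a sentence is that the change at $n_1$ can be a departure rather than an arrival; the rebidding hypothesis still applies to whichever $M_2$ UEs remain after the transition, so the $2M_2$ contribution to the bound is unaffected.
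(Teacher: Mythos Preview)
Your maximality argument is fine in spirit, but the lower-bound derivation misidentifies what is being counted and therefore does not actually establish the stated bound as anything other than a triviality.

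The transmission overhead in this proposition refers only to the messages exchanged \emph{in response to} the membership change at $n_1+1$, not the cumulative total since the system started. Including the $2M_1$ setup transmissions from before $n_1$ is outside the scope of the quantity being bounded; once you drop those, your tally collapses and no longer reaches $2M_2+2-M_1$. Likewise, the ``two-message handshake'' you posit is not how the protocol signals a membership change: the paper charges $|M_2-M_1|$ messages for the arriving UEs' initial bids (or the departing UEs' termination signals), not a fixed pair.

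The paper's argument is a direct per-phase count of the post-change protocol: $|M_2-M_1|$ initial messages, then $k$ rounds in which all $M_2$ UEs rebid and the eNB replies, with the reply counted as either a single broadcast (if $\beta_i$ is known at the UEs) or $M_2$ unicasts of modified prices $p_E/\beta_i$ (if $\beta_i$ is held only at the eNB). Summing gives $(k+1)M_2+k+1-M_1$ in the broadcast case and $(2k+2)M_2-M_1$ in the unicast case when $M_2>M_1$; minimizing over $k\ge 1$ and over the two cases yields exactly $2M_2+2-M_1$. The decreasing case $M_1>M_2$ is handled analogously. Your write-up needs this explicit case split on where $\beta_i$ resides and the parameter $k$ for the iteration depth, since the bound $2M_2+2-M_1$ is meant to be the actual minimum over these possibilities, not merely some number the overhead happens to exceed.
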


\begin{proof}
When $M_2>M_1$, i.e. the UE quantity increases, new UEs send their initial bids requiring $M_2-M_1$ transmissions to the eNB. If $\beta_i$ is available at the UEs, the eNB broadcasts a new shadow price in $M_2-M_1+1$ transmissions. Next, the $M_2$ UEs transmit the eNB their new bids followed by the eNB sending another shadow price. This is repeated $k$ times until an optimal rate allocation occurs which amounts to the transmission overhead $(M_2-M_1)+1+kM_2+k = (k+1)M_2+k+1-M_1$, corresponding minimally to $2M_2+2-M_1$ for one iteration ($k=1$). On the other hand, if $\beta_i$ is only available at the eNB, it transmits a modified shadow price $\frac{p_E}{\beta_i}$ to the UEs under its coverage. Thus, the transmission overhead will be $(M_2-M_1)+M_2+2kM_2 = (2k+2)M_2-M_1$, corresponding minimally to $4M_2-M_1$ for one iteration ($k=1$). Hence, the transmission overhead is greater than $2M_2+2-M_1$.

When $M_1>M_2$, i.e. the UE quantity decreases, the exiting UEs send service termination signals to the eNB in $M_1-M_2$ transmissions. If $\beta_i$ is available at the UEs, the eNB broadcasts a new shadow price in $M_1-M_2 +1$ transmissions. Next, the $M_2$ UEs transmit the eNB their new bids followed by the eNB sending another shadow price. This process is repeated $k$ times until an optimal rate allocation occurs which amounts to the transmission overhead $(M_1-M_2)+1+kM_2+k = (k-1)M_2+k+1+M_1$, corresponding minimally to $2M_2+2-M_1$ for one iteration ($k=1$). On the other hand, if $\beta_i$ is only available at the eNB, it transmits a modified shadow price $\frac{p_E}{\beta_i}$ to the UEs under its coverage. Thus, the transmission overhead becomes $(M_1-M_2)+M_2+2kM_2 = 2kM_2+M_1$ transmissions, corresponding minimally to $2M_2+M_1$ for one iteration ($k=1$). Hence, the transmission overhead of the distributed approach with rebidding is no less than $2M_2+2-M_1$.
\end{proof}

For the distributed scheme with rebidding, the following proposition \ref{pro:new_user_rebidding} discusses a sensitivity of rates and bids to UE quantity variations.

\begin{pro}\label{pro:new_user_rebidding}
The allocated rates and pledged bids with rebidding remain optimal in the distributed resource algorithm under UE quantity changes.
\end{pro}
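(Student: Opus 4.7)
The plan is to reduce the proposition to the optimality/convergence results already established in Part~I by showing that, after any UE quantity change, the rebidding process is simply a re-instance of the distributed algorithm applied to a well-defined convex EURA/IURA problem of the same form as before.

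First, I would split into the two cases suggested by the preceding proposition: $M_2>M_1$ (arrival) and $M_1>M_2$ (departure). In each case, immediately after the change at slot $n_1$, the eNB is confronted with a new population of $M_2$ UEs, each with its own aggregated utility $V_i(r_i)=\prod_{j=1}^{N_i}U_{ij}^{\alpha_{ij}}(r_{ij})$ and weight $\beta_i$. The EURA in equation~(\ref{eqn:opt_sub1}) and the IURA in equation~(\ref{eqn:opt_sub2}), now indexed over the new UE set, are again of the form treated in Part~I; in particular, Part~I proved that both problems are convex with tractable global optima (see the bullets preceding equations~(\ref{eqn:opt_sub1})–(\ref{eqn:opt_sub2})). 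Therefore a unique optimal rate vector $\mathbf{r}^{\star}(n_2)$ and a corresponding optimal bid/shadow-price pair exist for the post-change system.

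Second, I would argue that the rebidding procedure described in Section~\ref{sec:SensitivityRebid} is algorithmically identical to the iterative EURA–IURA loop of Part~I, only with non-trivial initial bids (those from the pre-change steady state, augmented or pruned by the arriving/departing UEs). Since Part~I established convergence of the distributed algorithm to the global optimum of~(\ref{eqn:opt_sub1})–(\ref{eqn:opt_sub2}) from any admissible initialization, the rebidding iterations generate a sequence $\{\mathbf{r}(n)\}_{n>n_1}$ whose limit coincides with $\mathbf{r}^{\star}(n_2)$; the pledged bids correspondingly converge to the shadow-price optimum. Thus the new steady-state rate vector in~(\ref{eqn:RtCg}) is optimal in the post-change system, and so are the bids.

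Finally, I would conclude by noting that the same conclusion holds regardless of whether $\beta_i$ is stored at the UEs or at the eNB (the two variants enumerated in the proof of Proposition~\ref{pro:new_user_rebidding_overhead}), because these variants only differ in the location at which $\beta_i$ multiplies the shadow price and therefore produce the same Karush–Kuhn–Tucker conditions for the EURA. The main obstacle I anticipate is the bookkeeping step of justifying that the pre-change bids constitute an admissible initialization for the post-change problem (so that Part~I's convergence theorem applies verbatim); for departures this is immediate by discarding the exiting entries, while for arrivals it requires specifying the initial bid of a newcomer (e.g.\ the first transmission described in Proposition~\ref{pro:new_user_rebidding_overhead}) and observing that Part~I's convergence does not depend on the specific admissible starting point.
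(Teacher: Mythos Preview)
Your proposal is correct and follows essentially the same approach as the paper: both arguments reduce the post-change situation to a fresh instance of the EURA/IURA problem with $M=M_2$ users and invoke the optimality and convergence results from Part~I (the paper cites Theorem~III.2 and Corollary~III.3 explicitly, then records the resulting rate/bid formulas $r_i(n_2)=S_i^{-1}(p(n_2))$, $w_i(n_2)=p(n_2)S_i^{-1}(p(n_2))$ with initial bids $w_i(n_1)$). Your treatment is somewhat more careful than the paper's---you make the case split explicit, discuss the admissibility of the warm-start bids, and note the irrelevance of where $\beta_i$ is stored---but these are refinements of the same idea rather than a different route.
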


\begin{proof}
Without loss of generality, we let the number of UEs change from $M_1$ to $M_2$, denoted as $M_1 \rightarrow M_2$. Assuming the distributed algorithm arrives at the steady state according to the equation (\ref{eqn:RtCg}), theorem III.2 and corollary III.3 conclude that the EURA and IURA optimizations with $M = M_1$ are optimal so that for $M_1$ UEs before the time slot $n_1$, we have:

\begin{equation}
\begin{aligned}\label{eqn:n1_RtDst}
r_i(n_1) &= S_i^{-1}(p(n_1))\\
w_i(n_1) &= p(n_1)S_i^{-1}(p(n_1))
\end{aligned}
\end{equation}
where $p(n_1)=\frac{\sum_{i=1}^{M_1}w_i(n_1)}{R}$ and the allocations are done in accordance UE EURA, eNB EURA, and IURA algorithms with initial bids $w_i(1) = 1$.  Likewise, theorem III.2 and corollary III.3 conclude that the EURA and IURA optimizations with $M = M_2$ are optimal when the users rebid for resources under UE quantity changes so that for $M_2$ UEs before time slot $n_2$, we have:

\begin{equation}
\begin{aligned}\label{eqn:n2_time_slot}
r_i(n_2) &= S_i^{-1}(p(n_2)),\\
w_i(n_2) &= p(n_2)S_i^{-1}(p(n_2)),
\end{aligned}
\end{equation}
where $p(n_2)=\frac{\sum_{i=1}^{M_2}w_i(n_2)}{R}$ and the reallocations are done in accordance with UE EURA, eNB EURA, and IURA algorithms with initial bids $w_i(n_1)$. Therefore, the rates and bids remain optimal for the distributed scheme with rebidding once the UE quantity changes.
\end{proof}

Next, section \ref{sec:SensitivityNoRebid} analyzes the sensitivity of bid and rates of the distributed approach to UE quantity changes when no-rebidding happens.

\subsubsection{Distributed Algorithm without rebidding}\label{sec:SensitivityNoRebid}
For every UEs entering/leaving the cell, we assume that prior UEs (i.e. users already in the cell and having been allotted optimal rates) will not rebid for resources so that priori UEs bids do not vary from those in the allocation before time slot $n=n_1$, when UE quantity alters. It is shown in proposition \ref{pro:new_user_no_rebidding_overhead} that the transmission overhead is less than the situation where users rebid for resources.

\begin{pro}\label{pro:new_user_no_rebidding_overhead}
The transmission overhead for the distributed resource allocation with no rebidding is not larger than the one with rebidding and is $M_2+1-M_1$ for $M_2>M_1$ and $M_1-M_2$ for $M_1>M_2$ at minimum.
\end{pro}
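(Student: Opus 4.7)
The plan is to mirror the case split of Proposition \ref{pro:new_user_rebidding_overhead}, handling $M_2>M_1$ and $M_1>M_2$ separately, but counting only the minimal transmissions needed when prior UEs keep their previously pledged bids $w_i(n_1)$ unchanged. Since the claim has two independent parts (the explicit minimum counts, and the inequality comparing with the rebidding case), I would prove the counts first and then obtain the comparison by inspection.

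For the UE quantity increase, $M_2>M_1$, I would enumerate the mandatory messages: each of the $M_2-M_1$ newly entering UEs must transmit an initial bid to the eNB, after which the eNB must broadcast the updated shadow price $p = \frac{\sum_{i=1}^{M_2}w_i}{R}$ so that the arrivals know the clearing price. With prior UEs holding their bids fixed, no further iteration is required, yielding $(M_2-M_1)+1 = M_2+1-M_1$ transmissions at minimum. For the decrease, $M_1>M_2$, the only unavoidable messages are the $M_1-M_2$ service termination signals from the exiting UEs; because no remaining UE rebids and no resource request has to be served, no further broadcast is logically required, giving the minimum $M_1-M_2$.

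To establish that this overhead never exceeds the rebidding counterpart, I would invoke the bounds already derived in Proposition \ref{pro:new_user_rebidding_overhead}. In the increase case, $M_2+1-M_1 \le 2M_2+2-M_1$ is equivalent to $M_2+1\ge 0$, which is immediate. In the decrease case, the analogous minimum from the rebidding proof ($\ge 2M_2+2-M_1$ or the alternative $2M_2+M_1$ when $\beta_i$ resides only at the eNB) likewise dominates $M_1-M_2$, so the claimed inequality follows without any additional machinery.

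The main obstacle I foresee is being precise about what \emph{no rebidding} excludes and what it still permits, because the count depends entirely on that convention. In particular I would need to justify that, after arrivals, a single broadcast of the new shadow price suffices (rather than needing an iterative update) precisely because the fixed prior bids and the new arrivals' initial bids already determine $p$ through the closed-form expression used in Proposition \ref{pro:new_user_rebidding}, and that after departures no broadcast is needed at all since the remaining UEs' allotments are not being recomputed. Once this modelling point is pinned down, the arithmetic is routine.
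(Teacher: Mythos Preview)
Your proposal is correct and follows essentially the same case split and counting argument as the paper. The only notable difference is that the paper additionally splits each case into two sub-cases according to whether the subscription weight $\beta_i$ is available at the UEs (broadcast shadow price) or only at the eNB (per-UE modified shadow price $p_E/\beta_i$), parameterizes the overhead by the number $k$ of iterations, and then takes $k=1$ to obtain the minimum; your version treats only the broadcast sub-case and argues directly for a single round, which yields the same minimal counts but skips the $\beta_i$-at-eNB branch and the explicit $k$-dependence.
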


\begin{proof}
When $M_2>M_1$, i.e. the UE quantity increases, new UEs send their initial bids requiring $M_2-M_1$ transmissions to the eNB which broadcasts a new shadow price only to the new UEs if $\beta_i$ is available at the UEs. This is repeated $k$ times until an optimal rate allocation is reached which amounts to the transmission overhead $k(M_2-M_1)+k$, corresponding minimally to $M_2-M_1+1$ for one iteration ($k=1$). On the other hand, if $\beta_i$ is only available at the eNB, it transmits a modified shadow price $\frac{p_E}{\beta_i}$ to the $M_2-M_1$ UEs in $2(M_2-M_1)+2k(M_2-M_1) = (2k+2)(M_2-M_1)$ transmissions, corresponding minimally to $4(M_2-M_1)$ for one iteration ($k=1$).

When $M_1>M_2$, i.e. the UE quantity decreases, exiting UEs send service termination signals to the eNB in $M_1-M_2$ transmissions. There are no further transmissions and the transmission overhead becomes $M_1-M_2$. Therefore, the minimum transmission overhead is $M_2+1-M_1$ for $M_2>M_1$ and $M_1-M_2$ for $M_1>M_2$, both less than the minimum $2M_2+2-M_1$ transmissions of the distributed scheme with rebidding proved in proposition \ref{pro:new_user_rebidding_overhead}.
\end{proof}

Next, lemma \ref{lem:new_user_no_rebidding} proves that the rates and bids are not optimal for the distributed scheme with no rebidding when the UE quantity changes.

\begin{lem}\label{lem:new_user_no_rebidding}
The allocated rates and bids without rebidding are not optimal for the distributed resource allocation under UE quantity changes.
\end{lem}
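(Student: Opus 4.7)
The plan is to argue by contradiction, reusing the KKT-style characterization of EURA optimality recorded in equations (\ref{eqn:n1_RtDst})--(\ref{eqn:n2_time_slot}). Suppose the no-rebidding allocation in force for $n>n_1$ were optimal for the EURA problem with $M_2$ UEs. Then, by the same reasoning that underpins proposition \ref{pro:new_user_rebidding}, there must exist a single shadow price $p^{\ast}$ such that every UE $i \in \{1,\ldots,M_2\}$ satisfies $r_i = S_i^{-1}(p^{\ast})$ and $w_i = p^{\ast}\, S_i^{-1}(p^{\ast})$, with $p^{\ast} = \tfrac{1}{R}\sum_{i=1}^{M_2} w_i$.

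Next, I would exploit what ``no rebidding'' imposes on the bids of the prior UEs. By definition, each surviving prior UE retains its steady-state bid $w_i = w_i(n_1) = p(n_1)\, S_i^{-1}(p(n_1))$ from before the change. For this frozen value to also satisfy the hypothetical optimality condition $w_i = p^{\ast}\, S_i^{-1}(p^{\ast})$, the strict monotonicity of $p \mapsto p\, S_i^{-1}(p)$ on the operating range (inherited from the strict concavity of $V_i$ and the strict monotonicity of $S_i^{-1}$, which is precisely what part I uses to assert uniqueness of the fixed point) forces $p^{\ast} = p(n_1)$.

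The third step contradicts this forced equality by examining the bid-sum balance. If $M_2 > M_1$, the arriving UEs submit strictly positive initial bids, so $\sum_{i=1}^{M_2} w_i > \sum_{i=1}^{M_1} w_i(n_1)$ and hence $p^{\ast} > p(n_1)$; if $M_1 > M_2$, removing the strictly positive bids of the departing UEs leaves $\sum_{i=1}^{M_2} w_i < \sum_{i=1}^{M_1} w_i(n_1)$ and hence $p^{\ast} < p(n_1)$. Either case contradicts $p^{\ast} = p(n_1)$, so the supposed optimality cannot hold. As an immediate corollary, the eNB's allocation derived from the stale bids fails to coincide with $S_i^{-1}(p^{\ast})$, and the frozen bids $w_i(n_1)$ themselves no longer match the optimal $p^{\ast}\, S_i^{-1}(p^{\ast})$.

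The principal technical obstacle I anticipate is the injectivity invoked in the second paragraph: one must rule out that two distinct shadow prices could produce the same bid value $p\, S_i^{-1}(p)$ for some UE and hence permit $p^{\ast}\neq p(n_1)$ while still matching the frozen bids. I would dispatch this by citing the strict decrease of $S_i^{-1}$ on the active range (equivalent to the strict concavity of the aggregated utility $V_i$ used throughout part I), which yields strict monotonicity of $p\, S_i^{-1}(p)$ and hence the desired uniqueness; the same ingredient already underlies part I's existence-and-uniqueness argument for the distributed algorithm's equilibrium, so invoking it requires no new machinery.
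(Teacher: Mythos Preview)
Your strategy differs from the paper's: the paper does not argue by contradiction but by direct computation, writing out the no-rebidding steady state explicitly (prior UEs keep $w_i(n_1)=p(n_1)S_i^{-1}(p(n_1))$ and are assigned rates $\tfrac{p(n_1)}{p(n_2)}S_i^{-1}(p(n_1))$, new UEs converge to $S_i^{-1}(p(n_2))$), and then contrasting these formulas term-by-term with the optimal ones in equation~(\ref{eqn:n2_time_slot}). No injectivity of a bid map is used.

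Your argument has a genuine gap exactly where you anticipated it, and the proposed fix does not work. Strict decrease of $S_i^{-1}$ does \emph{not} imply strict monotonicity of $p\mapsto p\,S_i^{-1}(p)$: the product of the increasing identity with a positive decreasing function can be increasing, decreasing, or non-monotone. For instance, $S_i^{-1}(p)=c/p$ makes the bid constant in $p$, and demand curves with elasticity crossing $-1$ make it non-monotone. Strict concavity of $V_i$ (equivalently, strict monotonicity of $S_i$) gives you only that $S_i^{-1}$ is strictly decreasing; it says nothing about the sign of $S_i^{-1}(p)+p\,(S_i^{-1})'(p)$. The uniqueness result from part~I that you invoke concerns the fixed point of the full distributed iteration on the rate vector, not injectivity of a single UE's bid as a function of the shadow price, so it cannot be cited here. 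Without this injectivity, the inference $p^{\ast}=p(n_1)$ from the frozen bids fails, and your contradiction does not close. The paper's direct comparison sidesteps the issue by exhibiting the discrepancy in the rate formulas themselves rather than routing through the bid map.
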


\begin{proof}
Assuming a rate and bid optimal convergence for the $M_1$ UEs according to equation (\ref{eqn:RtCg}) before the time slot $n_1$, we have:

\begin{equation}
\begin{aligned}\label{eqn:n1_time_slot}
r_i(n_1) &= S_i^{-1}(p(n_1)),\\
w_i(n_1) &= p(n_1)S_i^{-1}(p(n_1)),
\end{aligned}
\end{equation}
where $p(n_1)=\frac{\sum_{i=1}^{M_1}w_i(n_1)}{R}$ and the allocations are done in accordance with UE EURA, eNB EURA, and IURA algorithms with initial bids $w_i(1) = 1$.

When $M_2>M_1$, i.e. the UE quantity increases, prior users bids ($i \in \{1,2, ..., M_1\}$) remain fixed for $n>n_1$ and the shadow price can be written as:

\begin{equation}
\begin{aligned}\label{eqn:shadow_price_n1}
p(n>n_1) &= \frac{\sum_{i=1}^{M_1}w_i(n_1)+\sum_{i=M_1+1}^{M_2}w_i(n>n_1)}{R},\\
p(n>n_1) &= p(n_1) + \frac{\sum_{i=M_1+1}^{M_2}w_i(n>n_1)}{R}.
\end{aligned}
\end{equation}
and user bids can be expressed as equation (\ref{eqn:n2_time_slot_no_rebidding}).

\begin{equation}
\label{eqn:n2_time_slot_no_rebidding}
w_i(n>n_1) = \left\{
  \begin{array}{l l}
    p(n_1)S_i^{-1}(p(n_1))\:\: ; \:\:i=\{1,2,...,M_1\},\\
    p(n>n_1)S_i^{-1}(p(n>n_1))\:\: \\
    \:\:\:\:\:\:\:\:; \:\:i=\{M_1+1,M_1+2,...,M_2\}.
  \end{array} \right.
\end{equation}

Assuming a convergence for the $M_2$ UEs according to equation (\ref{eqn:RtCg}) before the time slot $n_2$, the optimal bids and rates can be written as equation (\ref{eqn:bids_n2_time_slot_no_rebidding}).

\begin{equation}
\begin{aligned}\label{eqn:bids_n2_time_slot_no_rebidding}
w_i(n_2) = \left\{
  \begin{array}{l l}
    p(n_1)S_i^{-1}(p(n_1))\:\: ; \:\:i=\{1,2,...,M_1\}\\
    p(n_2)S_i^{-1}(p(n_2))\:\: \\
    \:\:\:\:\:\:\:\:; \:\:i=\{M_1+1,M_1+2,...,M_2\}\\
  \end{array} \right.\\
r_i(n_2) = \left\{
  \begin{array}{l l}
    \frac{p(n_1)}{p(n_2)}S_i^{-1}(p(n_1))\:\: ; \:\:i=\{1,2,...,M_1\},\\
    S_i^{-1}(p(n_2))\:\: \\
    \:\:\:\:\:\:\:\:; \:\:i=\{M_1+1,M_1+2,...,M_2\}.
  \end{array} \right.
\end{aligned}
\end{equation}
where $p(n_1) = \frac{\sum_{i=1}^{M_1}w_i(n_1)}{R}$ and $p(n_2) = p(n_1) + \frac{\sum_{i=1+M_1}^{M_2}w_i(n_2)}{R}$ and the reallocation is done in accordance with UE EURA, eNB EURA, and IURA algorithms with the initial bids $w_i(n_1)$. Contrasting equations (\ref{eqn:bids_n2_time_slot_no_rebidding}) and (\ref{eqn:n2_time_slot}), we observe the rates and bids are non-optimal.
\end{proof}

Section \ref{sec:SensitivityCentralized} discusses the sensitivity of the centralized resource allocation to UE quantity alterations.

\subsubsection{Sensitivity of the Centralized Algorithm}\label{sec:SensitivityCentralized}
A transmission overhead analysis of the centralized resource allocation, assigning optimal rates after one algorithm iteration, is explained in proposition \ref{pro:2overhead}.

\begin{pro}\label{pro:2overhead}
The centralized resource allocation scheme assigns optimal rates with a minimum transmission overhead $2M_2-M_1$ and $M_1$ under increasing and decreasing UE quantities, respectively.
\end{pro}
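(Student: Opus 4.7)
The plan is to count transmissions in the two transitions allowed by (\ref{eqn:UECg}), reusing the single-iteration structure of the centralized scheme established in Proposition \ref{pro:overhead}. Since no shadow-price exchange is needed, the only uplink cost comes from new arrivals announcing their utility parameters or departures signalling termination, and the only downlink cost comes from the eNB pushing out the fresh solution of (\ref{eqn:opt_multiapp}). The two bounds in the statement should then drop out by inspection.

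For the increasing case $M_2>M_1$, I would first note that the $M_1$ incumbent UEs already reported their utility parameters before $n_1$, so only the $M_2-M_1$ newcomers have to transmit parameters upstream, contributing $M_2-M_1$ uplink messages. Because the budget constraint $\sum_{i=1}^{M}\sum_{j=1}^{N_i} r_{ij}\le R$ in (\ref{eqn:opt_multiapp}) couples every UE's allocation, one re-solve at the eNB alters every one of the $M_2$ allocations, so $M_2$ one-to-one downlink messages are required to deliver them. Summing yields $(M_2-M_1)+M_2=2M_2-M_1$, matching the claimed bound.

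For the decreasing case $M_1>M_2$, the $M_1-M_2$ exiting UEs each send a service-termination notice, producing $M_1-M_2$ uplink messages, after which the eNB re-solves (\ref{eqn:opt_multiapp}) over the reduced user set and transmits the new rates to the $M_2$ survivors in $M_2$ downlink messages, totalling $(M_1-M_2)+M_2=M_1$. Both counts are then argued to be minimal by the same reasoning as in Proposition \ref{pro:overhead}: arrivals must be announced, departures must be acknowledged, and every UE whose rate is altered must receive its updated rate.

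The main obstacle I anticipate is justifying that no downlink transmission can be elided in the increasing case. The tempting optimization is to let the $M_1$ incumbents keep their previous rates and only push new rates to the newcomers, which would cut the count below $2M_2-M_1$. The key point I would emphasize is that the shared constraint $R$ in (\ref{eqn:opt_multiapp}) makes the solution genuinely coupled across all UEs, so any new arrival perturbs every optimal $r_{ij}$; hence all $M_2$ rates must be re-communicated and the bound $2M_2-M_1$ is tight. Once that coupling is invoked, both bounds follow directly.
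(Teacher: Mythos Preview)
Your proposal is correct and follows essentially the same counting argument as the paper: in both the increasing and decreasing cases, the paper tallies the uplink messages from newcomers (or termination signals from departing UEs) plus the $M_2$ downlink rate messages to arrive at $2M_2-M_1$ and $M_1$, respectively. Your additional justification that the shared constraint in (\ref{eqn:opt_multiapp}) couples all allocations and hence forces all $M_2$ rates to be re-communicated is a welcome elaboration that the paper leaves implicit.
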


\begin{proof}
When $M_2>M_1$, i.e. the UE quantity increases, the new UEs send their utility function parameters in $M_2-M_1$ transmissions to the eNB which returns application rates for the new $M_2$ UEs. Thus, the transmission overhead becomes $2M_2-M_1$. On the other hand, for $M_1>M_2$, i.e. the UE quantity decreases, the exiting UEs transmit service termination signals in $M_1-M_2$ transmissions to the eNB which sends the rates for the new $M_2$ UEs. There are no further transmissions and the transmission overhead becomes $M_1-M_2+M_2 = M_1$.
\end{proof}

Next, section \ref{sec:SensitivityUsage} presents a sensitivity analysis of the distributed and centralized resource allocations to temporal application usage changes.

\section{Traffic Analysis under Application Usage Dynamics}\label{sec:SensitivityUsage}
We consider a cell of $M$ UEs and an eNB as in part I. At time slot $n_1$, the $i^{th}$ user has the aggregated utility function $V_i(r_i)$. At time instant $n_1+1$, $M'$ of the $M$ UEs ($M' < M$) change their aggregated utility, e.g. the $i^{th}$ UE of the $M'$ users gets the aggregated utility function $V'_i(r_i)$. Thus, the time-dependent aggregated utility function for the $i^{th}$ UE of the $M'$ users, denoted as $V_i(r_i,n)$, can be written as equation (\ref{eqn:AgUtTime}).

\begin{equation}
\label{eqn:AgUtTime}
V_i(r_i,n) = \left\{
  \begin{array}{l l}
    V_i(r_i)\:\: ; \:\: n\leq n_1\\
    V'_i(r_i)\:\: ; \:\:n > n_1
  \end{array} \right.
\end{equation}

Furthermore, we assume that a steady state rate allocation is achieved for the aggregated utility functions $V_i(r_i)$ and $V'_i(r_i)$ at time slots $n_1$ and $n_2$ respectively so that the steady state rate vector can be expressed as equation (\ref{eqn:rates_n2_time_slot_no_rebiddingSteadyState}).

\begin{equation}
\label{eqn:rates_n2_time_slot_no_rebiddingSteadyState}
\textbf{r}^{\text{ss}} = \left\{
  \begin{array}{l l}
    \textbf{r}(n_1)\:\: ; \:\: n\leq n_1\\
    \textbf{r}(n_2)\:\: ; \:\: n > n_1
  \end{array} \right.
\end{equation}

Like section \ref{sec:QnDn}, in addition to the centralized scheme, we consider the distributed algorithms with and without rebidding, of which the former is discussed in section \ref{sec:DsRbUsDn}.

\subsubsection{Distributed Algorithm with Rebidding}\label{sec:DsRbUsDn}
For every UEs altering their application usage percentages, which in turn change the corresponding aggregated utility functions, fixed-utility users (UEs with unvaried aggregated utility and optimal rates) rebid for resources. The eNB initiates the allocation by either sending (in case $\beta_i$ is only available at the eNB) or broadcasting (in case $\beta_i$ is available at the UEs) the shadow price. Proposition \ref{pro:new_utility_rebidding_overhead} proves the transmission overhead maximum. Moreover, the rates and bids remain optimal as explained in the proposition \ref{pro:new_utility_rebidding}.

\begin{pro}\label{pro:new_utility_rebidding_overhead}
The transmission overhead of the distributed resource allocation with rebidding is maximum and greater than $M+M'+2$ under application usage changes.
\end{pro}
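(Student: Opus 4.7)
The plan is to mirror the counting argument used in the proof of proposition \ref{pro:new_user_rebidding_overhead}, splitting into the two sub-cases that depend on where the subscription weight $\beta_i$ physically resides. The key structural fact I will use for the rebidding case is that, by definition of the protocol, \emph{every} one of the $M$ UEs in the cell (not merely the $M'$ that altered their application usage) participates in each iteration once the eNB has received notification of the utility change. Consequently each iteration contributes $M$ uplink bid transmissions plus one or more downlink shadow-price transmissions, and I only need to produce a lower bound, so I will evaluate the count at $k=1$ iteration.

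First, I would enumerate the transmissions for the case $\beta_i$ is available at the UEs. Step (i): the $M'$ UEs whose aggregated utility changed from $V_i(r_i)$ to $V'_i(r_i)$ notify the eNB, contributing $M'$ transmissions. Step (ii): the eNB broadcasts one new shadow price, contributing $1$ transmission. Step (iii): each of the $k$ rounds consists of $M$ uplink rebids followed by one broadcast, contributing $k(M+1)$ transmissions. The cumulative overhead is $M' + 1 + k(M+1)$, which is minimized at $k=1$ to give $M + M' + 2$.

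Next, I would carry out the parallel enumeration when $\beta_i$ is only at the eNB, so that the eNB must unicast the modified price $p_E/\beta_i$ to each UE individually. Now the initial price dispatch costs $M$ transmissions and each of the $k$ rounds costs $2M$ transmissions (the $M$ rebids plus an $M$-shot unicast response), giving a cumulative overhead $M' + M + 2kM$, minimized at $k=1$ to $M' + 3M$. Since $M + M' + 2 \le M' + 3M$ for every $M \ge 1$, the smaller of the two minima is $M + M' + 2$, and therefore the transmission overhead of the distributed scheme with rebidding is bounded below by $M + M' + 2$. The ``maximum'' qualifier in the statement is then read off by comparing this count to the no-rebidding analogue that will be treated in the next proposition, exactly as was done for the UE-quantity version.

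The main obstacle is purely bookkeeping rather than technical: I need to be careful that (a) the initial shadow-price step is counted as one broadcast versus $M$ unicasts according to the sub-case, (b) rebidding forces all $M$ users to transmit each round whereas only $M'$ of them trigger the process, and (c) the minimum is taken over the two sub-cases, not averaged. No new tools are required beyond the counting scheme already established in proposition \ref{pro:new_user_rebidding_overhead}, with the substitutions $M_2 \to M$ and $M_2 - M_1 \to M'$.
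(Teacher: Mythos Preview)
Your argument is correct and follows essentially the same counting scheme as the paper: split on the location of $\beta_i$, tally $M'+1+k(M+1)$ in the broadcast case and $M'+M+2kM$ in the unicast case, set $k=1$, and take the smaller of the two minima to obtain $M+M'+2$. The only cosmetic difference is that you separate the initial broadcast from the iteration loop, whereas the paper folds it into the narrative, but the resulting formula and conclusion are identical.
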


\begin{proof}
The $M'$ UEs changing their aggregated utility send their initial bids in $M'$ transmissions to the eNB. If $\beta_i$ is available at the UEs, the eNB broadcasts a new shadow price to the devices followed by $M$ UEs transmit new bids to the eNB. This process is repeated $k$ times until optimal rates are achieved and the transmission overhead becomes $M'+1+kM+k$, corresponding minimally $M+M'+2$ for one iteration ($k = 1$). On the other hand, if $\beta_i$ is only available at the eNB, it transmits a modified shadow price $\frac{p_E}{\beta_i}$ in $M$ transmissions to the UEs under its coverage. The $M$ UEs send new bids to which eNB and this process is iterated $k$ times until optimal rates are achieved. The transmission overhead becomes $M'+M+2kM$ minimally corresponding to $3M+M'$ for one iteration ($k = 1$). Contrasting minimal overheads in both cases, the minimum transmission overhead is $M+M'+2$.
\end{proof}

\begin{pro}\label{pro:new_utility_rebidding}
The allocated rates and bids of the distributed resource allocation with rebidding remain optimal under application usage changes.
\end{pro}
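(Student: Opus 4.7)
The plan is to follow exactly the template used in the proof of Proposition~\ref{pro:new_user_rebidding}, replacing the UE-quantity change $M_1 \to M_2$ by the aggregated-utility change $V_i \to V'_i$ on an $M'$-subset of the population. First, I would invoke Theorem~III.2 and Corollary~III.3 of Part~I to conclude that, prior to $n_1$ under the utilities $\{V_i\}_{i=1}^{M}$, the distributed scheme has reached a steady state in which the EURA of (\ref{eqn:opt_sub1}) and the IURA of (\ref{eqn:opt_sub2}) are jointly optimal. That yields $r_i(n_1) = S_i^{-1}(p(n_1))$ and $w_i(n_1) = p(n_1)\,S_i^{-1}(p(n_1))$ with shadow price $p(n_1) = R^{-1}\sum_{i=1}^{M} w_i(n_1)$, where $S_i$ is the bid-response map induced by $V_i$.

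Next, I would argue that for $n > n_1$ the EURA problem is still of the form (\ref{eqn:opt_sub1}), with the same rate-sum and nonnegativity constraints; only its objective has switched to $\prod_{i=1}^{M}(V_i^{*}(r_i))^{\beta_i}$, where $V_i^{*} = V'_i$ on the $M'$-subset of changed UEs and $V_i^{*} = V_i$ elsewhere. Because each $V'_i$ is again a product of valid sigmoidal/logarithmic utilities satisfying the properties stated in the Background, the hypotheses of Theorem~III.2 and Corollary~III.3 apply verbatim to the post-change problem (the same convexity, tractability, and fixed-point characterization hold). Under rebidding, all $M$ UEs retransmit bids seeded by $w_i(n_1)$, so the EURA/IURA iteration of Part~I is re-executed on the new objective and converges, giving $r_i(n_2) = (S_i^{*})^{-1}(p(n_2))$ and $w_i(n_2) = p(n_2)\,(S_i^{*})^{-1}(p(n_2))$ with $p(n_2) = R^{-1}\sum_{i=1}^{M} w_i(n_2)$, where $S_i^{*}$ is the bid-response map induced by $V_i^{*}$. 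A brief contrast with (\ref{eqn:bids_n2_time_slot_no_rebidding}) makes clear that rebidding is precisely what forces the fixed-utility UEs' bids to re-align with the new shadow price.

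The main subtlety---not so much an obstacle as a point to verify---is that the convergence argument of Part~I does not depend on the particular strictly positive initialization of the bids. Since the subgradient-style iteration used there converges from any strictly positive $\{w_i\}$, seeding the re-optimization with $w_i(n_1)$ in place of the default $w_i(1) = 1$ has no effect on the limit. This closes the loop and establishes that both the allocated rates and the pledged bids are optimal after an application-usage change, provided rebidding is enforced.
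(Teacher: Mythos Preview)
Your proposal is correct and follows essentially the same approach as the paper: invoke Theorem~III.2 and Corollary~III.3 of Part~I before and after the utility change, write the steady-state fixed-point relations $r_i = (S_i^{*})^{-1}(p)$ and $w_i = p\,(S_i^{*})^{-1}(p)$ in each regime, and conclude that rebidding restores optimality because the post-change problem is again an instance of (\ref{eqn:opt_sub1})--(\ref{eqn:opt_sub2}). Your treatment is in fact slightly more careful than the paper's, since you explicitly verify that the hypotheses of Theorem~III.2 carry over to the modified aggregated utilities and you address the bid-initialization point, both of which the paper leaves implicit.
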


\begin{proof}
Theorem III.2 and corollary III.3 concludes that the new EURA and IURA optimizations are optimal for $M$ users. Assuming the aggregated UE utility $V_i(r_i)$ and slope curvature function $S_i(r_i)$, the optimal rates and bids at time slot $n_1$ can be expressed as equation set (\ref{eqn:n1_time_slot_utility}).

\begin{equation}
\begin{aligned}\label{eqn:n1_time_slot_utility1}
r_i(n_1) &= S_i^{-1}(p(n_1))\\
w_i(n_1) &= p(n_1)S_i^{-1}(p(n_1))
\end{aligned}
\end{equation}
where $p(n_1)=\frac{\sum_{i=1}^{M}w_i(n_1)}{R}$ and allocations are done in accordance with UE EURA, eNB EURA, and IURA algorithms with initial bids $w_i(1) = 1$. Once $M'$ UEs alter the usage percentage of at least one of their applications leading to their aggregated utility functions $V'_i(r_i)$ and slope curvature functions $S'_i(r_i)$, the $M-M'$ UEs keep their aggregated utilities $V_i(r_i)$ and slope curvature functions $S_i(r_i)$. Thus, optimal rates and bids at time slot $n_2$ can be expressed as:

\begin{equation}
\begin{aligned}\label{eqn:aaa}
w_i(n_2) = \left\{
  \begin{array}{l l}
    p(n_1)S_i^{'-1}(p(n_2))\:\: ; \:\:i=\{1,2,...,M'\}\\
    p(n_2)S_i^{-1}(p(n_2))\:\: \\
    \:\:\:\:\:\:\:\:; \:\:i=\{M'+1,M'+2,...,M\}\\
  \end{array} \right.\\
  \\
r_i(n_2) = \left\{
  \begin{array}{l l}
    \frac{p(n_1)}{p(n_2)}S_i^{-1}(p(n_1))\:\: ; \:\:i=\{1,2,...,M_1\},\\
    S_i^{'-1}(p(n_2))\:\: ; \:\:i=\{1,2,...,M'\} \\
    S_i^{-1}(p(n_2))\:\: ; \:\:i=\{M'+1,M'+2,...,M\}.
  \end{array} \right.
\end{aligned}
\end{equation}

where $p(n_2)=\frac{\sum_{i=1}^{M}w_i(n_2)}{R}$ and the reallocation procedure is done in accordance with UE EURA, eNB EURA, and IURA algorithms with initial bids $w_i(n_2)$. Contrasting equations (\ref{eqn:aaa}) and (4), we observe that the rates and bids are optimal under application usage changes.
\end{proof}

Next, section \ref{sec:DsnRbUsDn} investigates the sensitivity of the distributed algorithm with no rebidding to application usage alterations.

\subsubsection{Distributed Algorithm without Rebidding}\label{sec:DsnRbUsDn}
For every UEs altering their application usage percentages, which in turn change the corresponding aggregated utility functions, fixed-utility users (UEs with unvaried aggregated utility and optimal rates) will not rebid for resources so that their rates and bids at time $n_2$ are identical to those of time slot $n=n_1$. In this case, the transmission overhead is less than the one with rebidding as shown in proposition \ref{pro:new_utility_no_rebidding_overhead}. Besides, the rates and bids are not optimal as illustrated in lemma \ref{lem:new_utility_no_rebidding}.

\begin{pro}\label{pro:new_utility_no_rebidding_overhead}
The transmission overhead of the distributed resource allocation with no rebidding is not larger than the one with rebidding and equals $M'+1$ at minimum.
\end{pro}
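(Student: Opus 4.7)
The plan is to mirror the case-split structure used in the proof of Proposition \ref{pro:new_user_no_rebidding_overhead}, but replace ``newly arriving UEs'' by ``UEs whose aggregated utility has changed'' and keep the total population fixed at $M$. The setting assumption is crucial: the $M-M'$ fixed-utility users do not rebid, and by Proposition \ref{pro:new_utility_rebidding} their previously optimal bids remain unchanged, so their contribution to the transmission count is zero and the algorithm only has to re-converge over the $M'$ changed users.

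First I would handle the case where $\beta_i$ is available at the UEs. One round of the distributed EURA update then costs $M'$ bid-transmissions from the changed UEs to the eNB, plus a single broadcast of the shadow price from the eNB, for $M'+1$ transmissions per iteration. After $k$ iterations the total is $k(M'+1)=kM'+k$, which is minimized at $k=1$, giving $M'+1$. Next I would handle the case where $\beta_i$ lives only at the eNB, following the accounting in Proposition \ref{pro:new_user_no_rebidding_overhead}: the eNB must send an individualized modified price $p_E/\beta_i$ to each of the $M'$ rebidding UEs, costing $M'$ transmissions, and each of the $k$ update rounds exchanges $2M'$ messages, for a total of $2M'+2kM' = (2k+2)M'$, minimized at $k=1$ to $4M'$. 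Taking the minimum over the two cases yields $\min(M'+1,\,4M') = M'+1$ for every $M'\geq 1$, which establishes the stated lower bound.

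The comparison with the rebidding case is essentially free once the counts are in hand. Proposition \ref{pro:new_utility_rebidding_overhead} already gives a minimum overhead of $M+M'+2$ for the distributed algorithm with rebidding, and $M'+1 \leq M+M'+2$ holds trivially since $M\geq 0$ (in fact $M\geq M'$). This supplies the second half of the claim, that the no-rebidding overhead is no larger than the rebidding one.

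I expect the main obstacle to be purely bookkeeping rather than mathematical: the earlier propositions in this section adopt slightly different conventions for whether the initial bid is counted separately from the iterative rounds and for whether a broadcast counts as one transmission or as $M'$ transmissions, so I would need to be careful to reuse exactly the same conventions in each case split to make the $M'+1$ figure come out clean and comparable to Proposition \ref{pro:new_utility_rebidding_overhead}. A secondary, smaller point to justify is why the eNB is permitted to broadcast a shadow price reaching only the $M'$ rebidding UEs (or why fixed-utility UEs may ignore a received broadcast without retransmitting), which follows from the assumption that their bids are frozen and their allocated rates have not been invalidated.
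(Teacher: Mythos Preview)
Your proposal is correct and follows essentially the same approach as the paper: the same two-case split on where $\beta_i$ resides, the same per-iteration counts $kM'+k$ and $(2k+2)M'$, minimization at $k=1$, and the direct comparison $M'+1\le M+M'+2$ against Proposition~\ref{pro:new_utility_rebidding_overhead}. One tiny wording slip to tidy: in your second case you describe the upfront cost as ``$M'$ transmissions'' but then write $2M'+2kM'$; the formula (which matches the paper and Proposition~\ref{pro:new_user_no_rebidding_overhead}) is the correct count, so just align the prose to it.
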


\begin{proof}
$M'$ UEs changeing their aggregated utility functions send the initial bids in $M'$ transmissions to the eNB. If $\beta_i$ is available at the UEs, the eNB broadcasts a new shadow price to the UEs, the routine is repeated $k$ times until optimal rates are achieved, and the transmission overhead becomes $kM'+k$, corresponding minimally to $M'+1$ for one iteration ($k = 1$). On the other hand, if $\beta_i$ is only available at the eNB, it transmits a modified shadow price $\frac{p_E}{\beta_i}$ to the $M'$ UEs, the algorithm is iterated $k$ times until optimal rates are achieved, and the transmission overhead becomes $2M'+2kM' = (2k+2)M'$, corresponding minimally to $4M'$ for one iteration ($k = 1$). So the minimum transmission overhead for the distributed algorithm with no-rebidding becomes $M'+1$, which is less than overhead for the distributed scheme with rebidding (proposition \ref{pro:new_utility_rebidding_overhead}).
\end{proof}

\begin{lem}\label{lem:new_utility_no_rebidding}
The allocated rates and bids of the distributed resource allocation without rebidding are not optimal under application usage percentage changes.
\end{lem}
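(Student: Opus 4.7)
The plan is to mirror the structure of Lemma~\ref{lem:new_user_no_rebidding}, but with application-usage dynamics replacing UE-quantity dynamics. I would begin by recording the steady-state optimum at time slot $n_1$ via Theorem III.2 and Corollary III.3 (exactly as in the rebidding proposition): for each of the $M$ users with aggregated utility $V_i(r_i)$ and slope-curvature function $S_i(\cdot)$, we have $r_i(n_1)=S_i^{-1}(p(n_1))$ and $w_i(n_1)=p(n_1)S_i^{-1}(p(n_1))$, with shadow price $p(n_1)=\sum_{i=1}^{M}w_i(n_1)/R$.

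Next, I would describe the post-change system for $n>n_1$: the $M'$ UEs whose usages shifted now use utilities $V'_i$ with slope-curvature functions $S'_i$, while the remaining $M-M'$ UEs keep $V_i$ and $S_i$. The no-rebidding assumption pins the unchanged UEs' bids at $w_i(n_1)=p(n_1)S_i^{-1}(p(n_1))$, so only the $M'$ modified UEs transmit updated bids. The resulting shadow-price recursion reads $p(n>n_1)=\big(\sum_{i=1}^{M'}w_i(n>n_1)+\sum_{i=M'+1}^{M}p(n_1)S_i^{-1}(p(n_1))\big)/R$. Iterating the EURA/IURA updates only over the $M'$ modified UEs until convergence at $n_2$ yields $w_i(n_2)=p(n_2){S'}_i^{-1}(p(n_2))$ and $r_i(n_2)={S'}_i^{-1}(p(n_2))$ for $i\le M'$, and $w_i(n_2)=p(n_1)S_i^{-1}(p(n_1))$, $r_i(n_2)=\frac{p(n_1)}{p(n_2)}S_i^{-1}(p(n_1))$ for $i>M'$ (because each UE's assigned rate under EURA is $w_i/p$, as used in Lemma~\ref{lem:new_user_no_rebidding}).

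Finally, I would contrast this no-rebidding allocation with the true global optimum produced by Proposition~\ref{pro:new_utility_rebidding} in equation~(\ref{eqn:aaa}): there, the unchanged UEs would bid $p(n_2)S_i^{-1}(p(n_2))$ and receive $S_i^{-1}(p(n_2))$. Because $p(n_1)\neq p(n_2)$ whenever the perturbation $V_i\to V'_i$ actually shifts aggregate demand, the bids and rates of the unchanged UEs in the no-rebidding scheme disagree with their optimal values, yielding the claimed non-optimality. This contrast is precisely parallel to the one between equations~(\ref{eqn:bids_n2_time_slot_no_rebidding}) and~(\ref{eqn:n2_time_slot}) carried out in Lemma~\ref{lem:new_user_no_rebidding}.

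The main obstacle is making the $p(n_1)\neq p(n_2)$ step watertight: I would need to argue that a strict change in even a single aggregated utility $V_i\to V'_i$ strictly shifts the KKT-determined shadow price, which follows from the strict monotonicity of the $S_i^{-1}(\cdot)$ maps established in part I's convexity analysis; otherwise the perturbation would leave every supply-demand equation unchanged, contradicting the assumption of a genuine usage change. Once this monotonicity fact is invoked, the remainder of the argument is a direct symbolic comparison with equation~(\ref{eqn:aaa}) and requires no further calculation.
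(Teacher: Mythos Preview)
Your proposal is correct and follows essentially the same approach as the paper: record the pre-change optimum at $n_1$, freeze the bids of the $M-M'$ unchanged UEs, derive the resulting shadow-price recursion and steady-state bids/rates at $n_2$, and contrast these with the rebidding optimum of Proposition~\ref{pro:new_utility_rebidding} to exhibit the discrepancy. Your treatment is in fact slightly more careful than the paper's, which simply performs the symbolic comparison without explicitly isolating the $p(n_1)\neq p(n_2)$ hypothesis you flag.
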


\begin{proof}
Assuming convergence before the time slot $n_1$, the optimal rates and bids at time slot $n_1$ can be expressed as equation set (\ref{eqn:n1_time_slot_utility}) where $p(n_1)=\frac{\sum_{i=1}^{M}w_i(n_1)}{R}$ and the allocation is done in accordance with UE EURA, eNB EURA, and IURA algorithms with initial bids $w_i(1) = 1$.

\begin{equation}
\begin{aligned}\label{eqn:n1_time_slot_utility}
r_i(n_1) &= S_i^{-1}(p(n_1))\\
w_i(n_1) &= p(n_1)S_i^{-1}(p(n_1)).
\end{aligned}
\end{equation}

The $i^{th}$ UE bids ($\{i|i=\{M'+1,M'+2, ..., M\}\}$) is fixed for time slots $n>n_1$. After applications usage changes, we can write the shadow price as equation set (\ref{eqn:shadow_price_n1_utility}).

\begin{equation}
\begin{aligned}\label{eqn:shadow_price_n1_utility}
p(n>n_1) &= \frac{\sum_{i=1}^{M'}w_i(n>n_1)+\sum_{i=1+M'}^{M}w_i(n_1)}{R}\\
p(n>n_1) &= \frac{\sum_{i=1}^{M'}w_i(n>n_1)}{R} + p(n_1).
\end{aligned}
\end{equation}

And user bids can be expressed as equation (\ref{eqn:n1_time_slot_no_rebidding_utility}).

\begin{equation}
\label{eqn:n1_time_slot_no_rebidding_utility}
w_i(n>n_1) = \left\{
  \begin{array}{l l}
    p(n>n_1)S_i^{'-1}(p(n>n_1))\:\: \\
    \:\:\:\:\:\:\:\:; \:\:i=\{1,2,...,M'\},\\
    p(n_1)S_i^{-1}(p(n_1))\:\: \\
    \:\:\:\:\:\:\:\:; \:\:i=\{M'+1,M'+2,...,M\}.
  \end{array} \right.
\end{equation}

Assuming convergence to optimal values before the time slot $n_2$, final rates and bids can be expressed as:

\begin{equation}
\begin{aligned}\label{eqn:n2_time_slot_no_rebidding_utility2}
w_i(n_2) = \left\{
  \begin{array}{l l}
    p(n_1)S_i^{'-1}(p(n_2))\:\: \\
    \:\:\:\:\:\:\:\:; \:\:i=\{1,2,...,M'\} \\
    p(n_1)S_i^{-1}(p(n_1))\:\: \\
    \:\:\:\:\:\:\:\:; \:\:i=\{M'+1,M'+2,...,M\}\\
  \end{array} \right.\\
r_i(n_2) = \left\{
  \begin{array}{l l}
    \frac{p(n_2)}{p(n_1)}S_i^{'-1}(p(n_2))\:\: \\
    \:\:\:\:\:\:\:\:; \:\:i=\{1,2,...,M'\},\\
    S_i^{-1}(p(n_1))\:\: \\
    \:\:\:\:\:\:\:\:; \:\:i=\{M'+1,M'+2,...,M\}
    \end{array} \right.
\end{aligned}
\end{equation}
where $p(n_1) = \frac{\sum_{i=1}^{M}w_i(n_1)}{R}$ and $p(n_2) = \frac{\sum_{i=1}^{M'}w_i(n_2)}{R} + p(n_1)$ and the procedure for the reallocation is done in accordance with UE EURA, eNB EURA, and IURA algorithms with initial bids $w_i(n_1)$. Contrasting the optimal rates in equation (\ref{eqn:n2_time_slot_no_rebidding_utility2}) and the rates in equation (4), we observe that the rates and bids for the distributed resource allocation without rebidding do not remain optimal under changing application usage percentages.
\end{proof}

Section \ref{sec:SensitivityCentralizedUsg} investigates the sensitivity of the centralized resource allocation to application usage alterations.

\subsubsection{Sensitivity of the Centralized Algorithm}\label{sec:SensitivityCentralizedUsg}
Here, we assume $M'$ out of $M$ ($M' < M$) UEs in the system change their application usages. For the centralized resource allocation scheme, optimal rates are assigned in one iteration, i.e. one transmission and one reception per UE. Proposition \ref{pro:central_utility_overhead_optimiality} analyzes the transmission overhead of the centralized approach to applications usage changes.

\begin{pro}\label{pro:central_utility_overhead_optimiality}
The centralized resource allocation algorithm assigns optimal rates with a minimum transmission overhead $M'+M$ when $M'$ UEs change their applications usage percentages.
\end{pro}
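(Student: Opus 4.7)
The plan is to reuse the structure of the proof of Proposition \ref{pro:2overhead} but replace ``new UEs entering'' by ``existing UEs with updated utility parameters,'' and to replace ``exiting UEs'' by ``UEs whose utility has not changed.'' Optimality in a single iteration of the centralized algorithm is inherited directly from Corollary III.4 of Part I, which was already invoked in Proposition \ref{pro:overhead}; so I would open the proof by citing Corollary III.4 to conclude that one round of the centralized protocol, after the eNB has received every up-to-date utility parameter, produces the optimal rate vector $\textbf{r}^{\text{opt}}$ solving (\ref{eqn:opt_multiapp}) for the new instance.

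Next I would count transmissions. The uplink contribution is immediate: only the $M'$ UEs whose aggregated utility function has altered must signal their new parameters to the eNB, accounting for $M'$ transmissions. For the downlink, I would argue that the eNB must re-transmit rates to \emph{all} $M$ UEs, not merely the $M'$ that changed. The key observation is that the coupling constraint $\sum_{i=1}^{M}\sum_{j=1}^{N_i}r_{ij}\le R$ together with the Lagrangian structure makes the optimal rate of every UE depend, through the common shadow price, on every other UE's utility; this is precisely the dependence exploited in (\ref{eqn:shadow_price_n1_utility}) and Lemma \ref{lem:new_utility_no_rebidding}. Hence any change in the aggregated utilities of the $M'$ UEs generically shifts $p(n_2)$ and therefore shifts $r_i = S_i^{-1}(p(n_2))$ for all $i\in\{1,\dots,M\}$, so each of the $M$ UEs must receive its updated rate from the eNB. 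This gives $M$ downlink transmissions and a total of $M'+M$.

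Finally, I would establish minimality. Fewer than $M'$ uplink transmissions would leave at least one of the modified UEs unreported, so the eNB would solve a stale problem and Corollary III.4 would not apply. Fewer than $M$ downlink transmissions would leave some UE with an out-of-date rate that no longer satisfies the KKT conditions of the updated instance, again violating optimality. Combining the two lower bounds yields the claimed minimum of $M'+M$ transmissions.

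The main obstacle, and the only place where a little care is needed, is the argument that every one of the $M$ downlink transmissions is actually necessary, since in pathological cases a particular $r_i$ might coincidentally be unchanged. I would sidestep this by phrasing the statement as a worst-case minimum over admissible utility updates and by noting that the protocol cannot know \emph{a priori} which rates are invariant without first solving the updated problem; thus the protocol-level lower bound is indeed $M'+M$, matching the upper bound and closing the proof.
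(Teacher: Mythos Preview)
Your proposal is correct and follows essentially the same approach as the paper: the paper's proof simply counts $M'$ uplink transmissions from the UEs whose utilities changed plus $M$ downlink transmissions of the new rates from the eNB, and cites Corollary~III.4 of Part~I for optimality. Your additional arguments---the shadow-price coupling that forces all $M$ rates to update, the explicit minimality reasoning, and the handling of pathological coincidences---go beyond what the paper actually proves, but they are sound refinements of the same counting argument rather than a different route.
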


\begin{proof}
The $M'$ UEs changing their aggregated utilities send their utility parameters in $M'$ transmissions to the eNB, which transmits new rates to the $M$ UEs. Therefore, the transmission overhead is $M'+M$. The optimality of the centralized algorithm has been proved by the corollary III.4 in part I; and, the proof is complete.
\end{proof}

Next, section \ref{sec:complexity} discusses the computational complexity associated with the centralized resource allocation scheme presented by the Algorithms V.1 and V.2 with those of the distributed approach in the Algorithms IV.3, IV.4, and IV.5.

\section{Computational Complexity and Power Consumption Insights}\label{sec:complexity}
The centralized resource allocation undergoes less computational complexity on the UE side compared to the distributed one as it does not include the computations of UE EURA algorithm of the two-stage distributed scheme. As a result, the power consumption required for implementing the centralized scheme in the UE is less. Next, in the section \ref{sec:sim}, we present simulation results for the resource allocation methods developed in Part I.

\section{Simulation Results}\label{sec:sim}
The communication cell, in part I, with $M = 6$ UEs and an eNB is considered such that each UE concurrently runs a delay-tolerant and a real-time applications with utility parameters in Table I of part I, where we have mentioned that the real-time are voice-over-IP, standard video, and high definition video and the delay tolerant ones are FTPs. Next, section \ref{subsec:SensitivityAlpha} presents simulations germane to the sensitivity of the algorithms to the changes in the application usage percentage.

Figure \ref{fig:TnOvCnDs} depicts a comparison between the transmission overhead incurred employing the centralized and the distributed resource allocation approaches, between which the former is concomitant with so significantly less overhead that the germane plot is much closer to the abscissa as opposed to the one for the latter case. This is in light of the fact that the centralized scheme subsumes less message exchanges between the UEs and eNB (one stage only) to assign optimal rates to the applications.

\begin{figure}[t!]
\centering
  \includegraphics[width=\plotwidth]{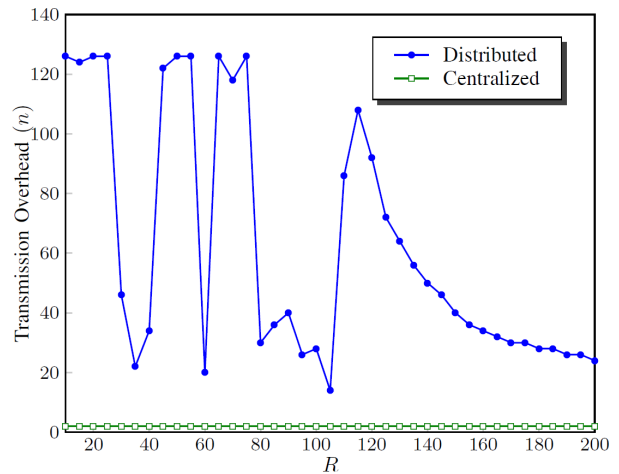}
  \caption{Transmission Overhead vs. eNB Rate $R$: The centralized resource allocation scheme incurs a drastically less transmission overhead as opposed to the distributed approach, as it requires less message exchanges before it converges to the optimal rates. This is in sight of the fact that the centralized method has the eNB assign application rates in one stage.}
  \label{fig:TnOvCnDs}
\end{figure}

Next, we show the sensitivity of the algorithms to temporal alterations in application usage percentages.

\subsection{Sensitivity to Application Usage Changes} \label{subsec:SensitivityAlpha}
In the following simulations, we set the termination threshold as $\delta = 10^{-3}$ and the total achievable rate at the eNB as $R = 180$. The, we measure the sensitivity of the algorithm to changes in the usage percentages of the applications running on the UEs. Based on the formulation in part I, the usage percentage is represented as an application-status differentiation weight, and the $M$ UEs switch between their applications with the usage percentages as equation (\ref{eqn:usageArray}) whose values are taken from the first five rows in Table \ref{table:usage_percentage}. As an illustration, during the simulation time $0 < t < 100$, $\boldsymbol \alpha = \boldsymbol \alpha_1 = \{0.1, 0.5, 0.9, 0.1, 0.5, 0.9, 0.9, 0.5, 0.1, 0.9, 0.5, 0.1\}$, and so forth. It is noteworthy that the temporal application usage percentage $\boldsymbol\alpha_1 = \{0.1, 0.5, 0.9, 0.1, 0.5, 0.9, 0.9, 0.5, 0.1, 0.9, 0.5, 0.1\}$ indicates that the $1^{st}$ utilizes its real-time and delay-tolerant applications $90\%$ and $10\%$ 
of the $0 < t < 100$ simulation time respectively, and likewise for other UEs and applications.

\begin {table}[]
\caption {Applications Status Differentiation}
\label{table:usage_percentage}
\begin{center}
\renewcommand{\arraystretch}{1.4}
\begin{tabular}{| l | l |}
  \hline
  \multicolumn{2}{|c|}{Applications Usage-Percentage} \\  \hline
  $\boldsymbol\alpha_1$ &  \{0.1, 0.5, 0.9, 0.1, 0.5, 0.9, 0.9, 0.5, 0.1, 0.9, 0.5, 0.1\} \\ \hline
  $\boldsymbol\alpha_2$ &  \{0.5, 0.3, 0.2, 0.5, 0.3, 0.2, 0.5, 0.7, 0.8, 0.5, 0.7, 0.8\} \\ \hline
  $\boldsymbol\alpha_3$ &  \{0.5, 0.9, 0.8, 0.5, 0.9, 0.8, 0.5, 0.1, 0.2, 0.5, 0.1, 0.2\} \\ \hline
  $\boldsymbol\alpha_4$ &  \{0.5, 0.3, 0.2, 0.5, 0.3, 0.2, 0.5, 0.7, 0.8, 0.5, 0.7, 0.8\} \\ \hline
  $\boldsymbol\alpha_5$ &  \{0.5, 0.9, 0.8, 0.5, 0.9, 0.8, 0.5, 0.1, 0.2, 0.5, 0.1, 0.2\} \\ \hline
  $\boldsymbol\alpha_a$ &  \{0.1, 0.5, 0.9, 0.1, 0.5, 0.0, 0.9, 0.5, 0.1, 0.9, 0.5, 0.0\} \\ \hline
  $\boldsymbol\alpha_b$ &  \{0.1, 0.5, 0.9, 0.1, 0.5, 0.9, 0.9, 0.5, 0.1, 0.9, 0.5, 0.1\} \\ \hline
\end{tabular}
\end{center}
\end {table}

The sensitivity of the algorithms as for UE rate allocations and bids is depicted in Figure \ref{fig:sim:app_percentage_rates}, where the users rates, e.g. $r_i$ for the $i^{th}$ UE, changes with time for the changing usage percentages in accordance with $\boldsymbol\alpha(t)$; they also converge to optimal rates during each time interval as the resource allocation algorithms execute. Furthermore, Figure \ref{fig:sim:app_percentage_bids} portrays the UE bids, e.g. $w_i$ for the $i^{th}$ UE, as they change during the different time intervals during which application usage percentages alter; however, the bids also converge with time for the changing usage percentages. In particular, we observe that at the commencement of the algorithm the jumps between the bids are larger, whereas these jumps significantly shrink as time proceeds and the rates/bids converge to their optimal values.

\begin{figure}
\centering
\subfigure[UE Rates with Temporally Changing Application Usages.]{\label{fig:sim:app_percentage_rates}\includegraphics[width=\plotwidth]{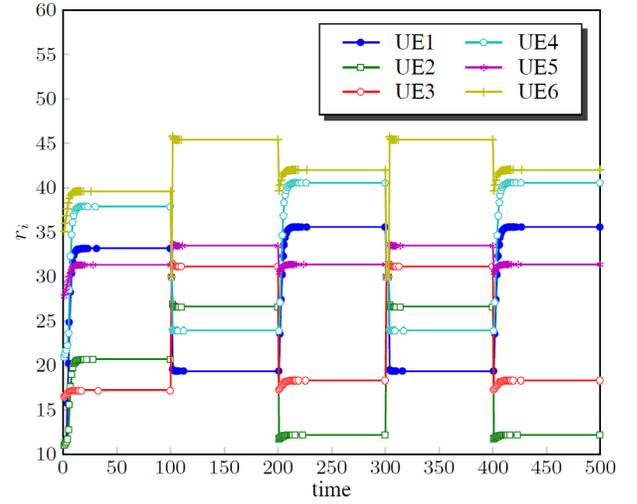}}\qquad
\subfigure[UE Bids with Temporally Changing Applications Usages.]{\label{fig:sim:app_percentage_bids}\includegraphics[width=\plotwidth]{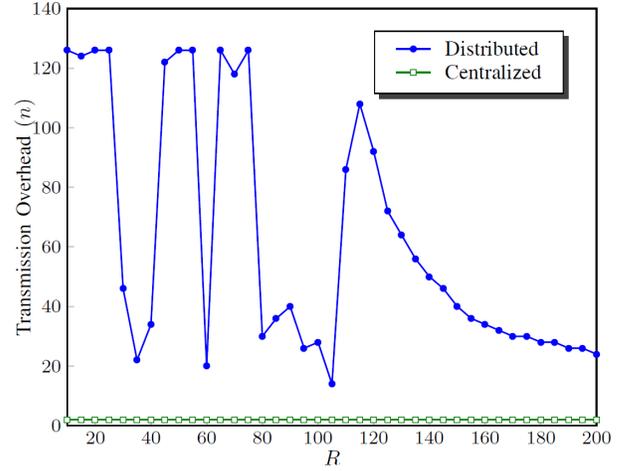}}%
\caption{The UEs allocated rates $r_i$ and pledged bids $w_i$ vary in time as the application usage percentages change. Figure (a) illustrates UE rate changes as the applications usages alters, whereas Figure (b) depicts the UE bid changes as the applications usages vary in time. As time proceeds rate/bid jumps shrink due to the convergence. Furthermore, in every interval rates/bids converge towards optimal ones.}
\end{figure}

In Figure \ref{fig:sim:app_percentage_shadow_price}, we show the changes in the shadow price $p$ vs. time when the application usage percentages change. Here, we can observe a shrinkage of the shadow price variations as the algorithms execute in time. Furthermore, higher usage corresponds to a higher shadow price. Moreover, a comparison between the centralized and distributed algorithms transmission overhead whilst switching the application-usage percentages is illustrated in Figure \ref{fig:sim:app_percentage_tx_overhead}, which shows that the centralized method incurs a significantly lower transmission overhead due to the lesser number of transmissions it needs as opposed to those of the distribution approach. In fact, the centralized scheme transmission overhead plot is very close to the abscissa in view of the drastic less message exhchanges it incurs, and the transmission overhead is independent of the termination threshold $\delta$. On the flip side, the lower termination thresholds augments the 
transmission overhead for the distributed resource allocation in that the smaller thresholds make the algorithm execute more iterations before it converges to optimal rates and the more iterations are tantamount to a larger number of messages exchanged between the eNB and UEs, which in turn escalates the transmission overhead dramatically.

\begin{equation}
\label{eqn:usageArray}
\boldsymbol\alpha(t) = \left\{
  \begin{array}{l l}
    \boldsymbol\alpha_1 & ;\;\;\;0  < t \le 100\\
    \boldsymbol\alpha_2 & ;100 < t \le 200\\
    \boldsymbol\alpha_3 & ;200 < t \le 300\\
    \boldsymbol\alpha_4 & ;300 < t \le 400\\
    \boldsymbol\alpha_5 & ;400 < t \le 500
  \end{array} \right.
\end{equation}

\begin{figure}[t!]
\centering
  \includegraphics[width=\plotwidth]{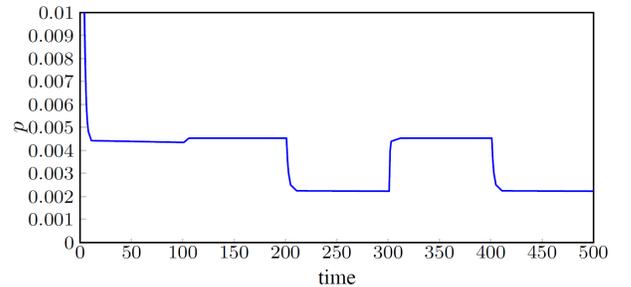}
  \caption{The shadow price varies with temporal changes of the application usage percentages. The more usage, the higher the shadow price.}
  \label{fig:sim:app_percentage_shadow_price}
\end{figure}

Next, section \ref{sec:SensitivityNewUE} discusses the sensitivity of the resource allocation schemes to changes in the number of UEs in the system.

\begin{figure}[t!]
\centering
  \includegraphics[width=\plotwidth]{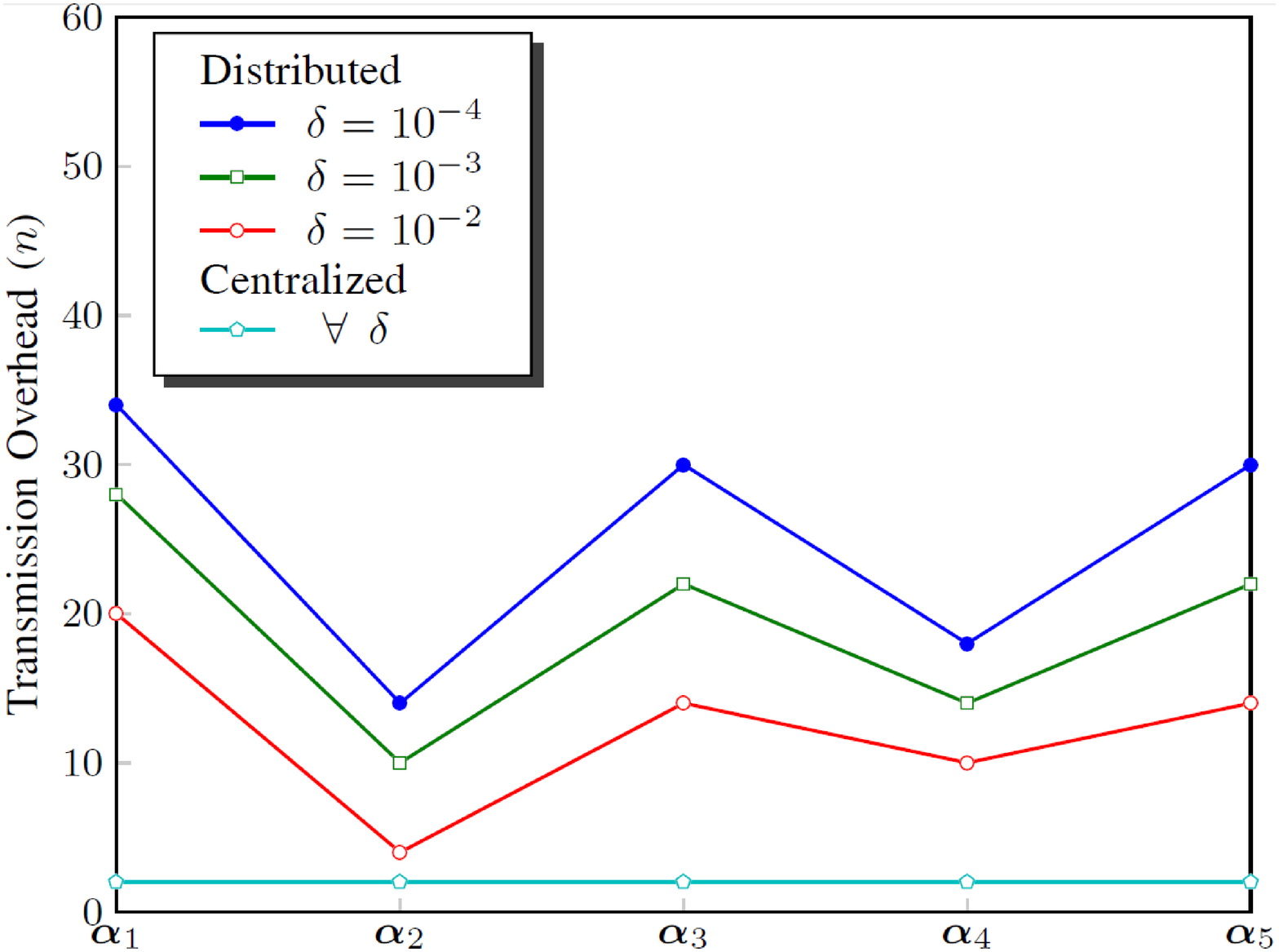}
  \caption{Transmission Overhead: The plots compares the centralized and distributed algorithms with respect to their transmission overhead when application usage percentages change. The centralized algorithm is independent of the termination threshold $\delta$, whereas the smaller thresholds causes more iterations of the algorithm leading to escalated transmission overheads.}
  \label{fig:sim:app_percentage_tx_overhead}
\end{figure}

\subsection{Sensitivity to Changes in the UE Quantity} \label{sec:SensitivityNewUE}
Similarly to the simulations in section \ref{subsec:SensitivityAlpha}, we set $\delta = 10^{-3}$ and $R = 180$ and measure the sensitivity of the proposed algorithms to changes in the number of the UEs in the system, where new users enter the cell so that the UE quantity at time slot $n_1 = 100$ is $M_1 = 5$ and at time instant $n_1 + 1$ the is $M_2 = 6$. Then, the instantaneous number of users, denoted as $M(n)$, is given by the equation (\ref{eqn:no_of_users}).

\begin{equation}
\label{eqn:no_of_users}
M(n) = \left\{
  \begin{array}{l l}
    M_1\:\: ; \:\:n\leq n_1\\
    M_2\:\: ; \:\:n>n_1
  \end{array} \right.
\end{equation}

Furthermore, we set the temporal application status differentiation $\boldsymbol\alpha$ as equation (\ref{eqn:newUserArray}).

\begin{equation}
\label{eqn:newUserArray}
\boldsymbol\alpha(t) = \left\{
  \begin{array}{l l}
    \boldsymbol\alpha_a & ;0   < t \le 100,\\
    \boldsymbol\alpha_b & ;100 < t \le 200.
  \end{array} \right.
\end{equation}

Our experiments show that the distributed algorithm without rebidding as well as the centralized approach deviate from optimal rates, bids, and shadow price under UE quantity changes. Figures \ref{fig:sim:app_percentage_rates} and \ref{fig:sim:app_percentage_bids} respectively show the errors in UE rates and bids, i.e. $|r_i - r_i^{\text{opt}}|$ and $|w_i - w_i^{\text{opt}}|$ for the $i^{th}$ UE, when a new user (here UE6) enters the cell. As we can observe, the centralized algorithm retains its optimal UE rate assignment and bid pledging so that the errors become $0$, whereas the distributed method is concomitant with peak errors before converging to the steady state values. Besides, looking at the different UE rate and bid errors, we observe that the UEs whose real-time applications have higher QoS requirements, tantamount to larger inflection points in their respective sigmoidal utility functions, undergo larger errors vis-a-vis the other UEs.

\begin{figure}
\centering
\subfigure[UE Rates with Temporally Changing Application Usages.]{\label{fig:sim:app_percentage_rates}\includegraphics[width=\plotwidth]{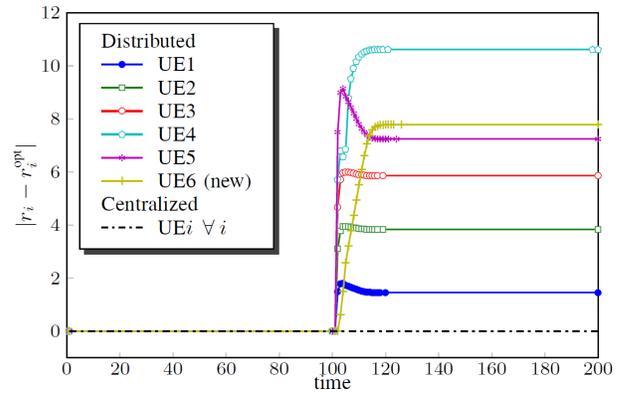}}\qquad
\subfigure[UE Bids with Temporally Changing Applications Usages.]{\label{fig:sim:app_percentage_bids}\includegraphics[width=\plotwidth]{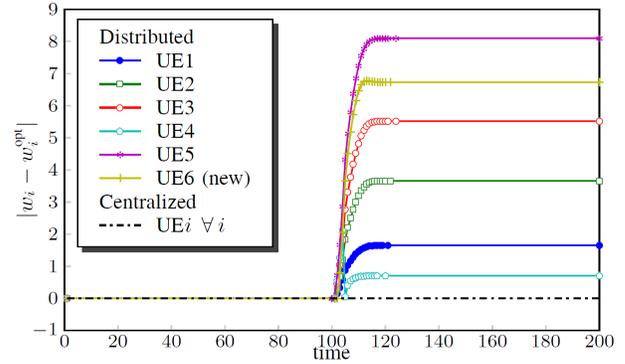}}%
\caption{Comparisons between the centralized and distributed algorithms divergence from optimal solutions (i.e. error) when a new users enters the cellular network. Figure (a) depicts the rate error $r_i$  when a new user $UE6$ enters the network. The dashed line represents the rate error for the one-stage centralized method, whose rate allocations remain optimal (zero error) upon the new UE entry. Figure (b) shows the bid errors when the new UE enters the network, and it shows that the bidding remains optimal for the centralized method. }
\end{figure}

Last, Figure \ref{fig:sim:app_percentage_tx_overhead} depicts the error in the eNB announced shadow price, i.e. $|p - p^{\text{opt}}|$, when UE6 enters the cell. As we can see, the centralized algorithm boasts a zero pricing error, whereas the distributed one is susceptible to about $30\%$ shadow price error.

\begin{figure}[t!]
\centering
  \includegraphics[width=\plotwidth]{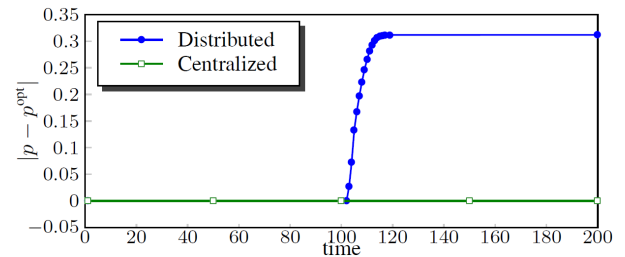}
  \caption{Figure (c) depicts the shadow price error for the distributed and centralized method, where the latter introduces no shadow price errors.}
  \label{fig:sim:app_percentage_tx_overhead}
\end{figure}

\section{Implementation Results} \label{Sec:Implementation}
In this section, we implement the resource allocation developed in Part I on a real-world network. The implementation system model  is depicted in Figure \ref{fig:system_model} and subsumes UEs WiFi-connected to the Internet through a network outfitted with a resource broker (RB) unit, containing the resource allocation code introduced in Part I, installed on a router which shapes the traffic based on application rates assigned by the RB entity in accordance with the resource allocation method in Part I.

\begin{figure}[!htb]
\begin{center}
\includegraphics[width=3.5in]{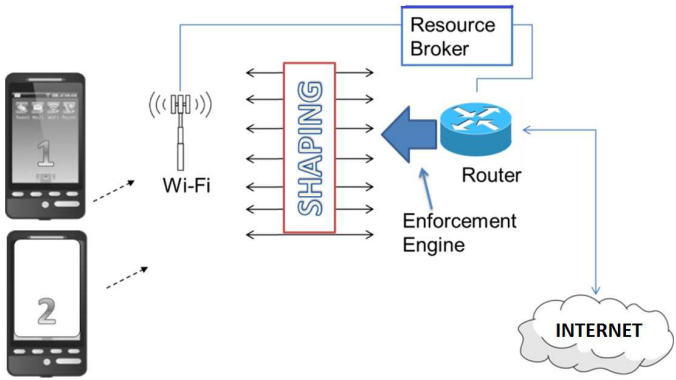}
\end{center}
\caption{Implementation System Model: UEs, WiFi-connected to the Internet, run delay-tolerant and real-time applications whose optimal rates are assigned by the RB unit, which includes the resource allocation method developed in Part I, installed on a router which shapes the traffic based on the rates allocated by the resource allocation method.}\label{fig:system_model}
\end{figure}

In order to implement the scenario in Figure \ref{fig:system_model} on a real-world network, we leverage a personal computer (PC) to configure the network in a distributed manner to decrease the  processing load through the virtual machine (VM) architecture in Figure \ref{fig:VM_TestBed}. Here, we have used a single-socket IBM x3250 M4 server \cite{IBM} with two physical and two Peripheral Component Interconnect (PCI)-enabled ports to create two three-interfaced VMs, between which one hosted the RB entity and the other formed a router including an enforcement engine to manage rate assignments via an onboard Linux router traffic control, and another dedicated file server VM. The two VMs (not the file server one) are annotated as "Guest 1" and "Guest 2" in Figure \ref{fig:VM_TestBed}, where the router and RB sit on Guest 1 and Guest 2, respectively. Besides, we create three virtual switches intended for the phone network, another for office-Internet-connected external devices, and one for network maintenance/
operation issues. The test-bed phones and their wireless access point (WAP) are on their own private network (i.e. the Ethernet network with internet protocol (IP) address $192.168.2.1$), router gateway settings enable connections to the office network (intranet) and Internet, and finally smartphones running Youtube and Hyper Text Transfer Protocol (HTTP) download applications are deployed. The traffic generated by the real-time/delay-tolerant Youtube/HTTP applications is inelastic/elastic, and we apply the resource allocation procedure in Part I to obtain application rates (throughput) and germane subjective QoE, reflected by buffering occurrences for the inelastic Youtube traffic and incomplete downloads for the elastic HTTP traffic. The dearth of the aforesaid events indicates an acceptable QoE for the application users and effectiveness of the resource allocation method in Part I.

\begin{figure}[!htb]
\begin{center}
\includegraphics[width=3.5in]{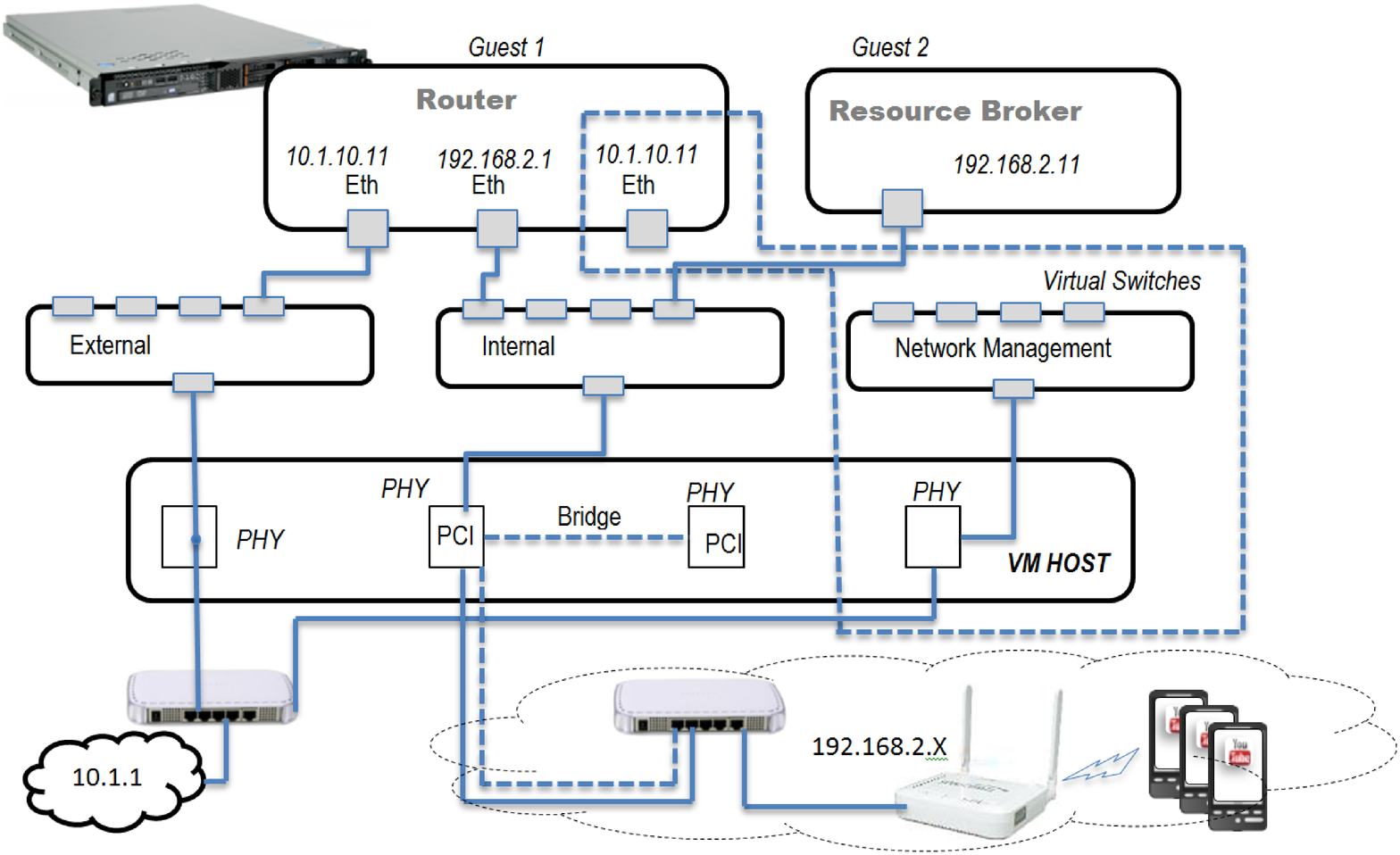}
\end{center}
\caption{Real-World Network Architecture: Two VMs "Guest 1" and "Guest 2" respectively host the router and resource allocation method (Part I) code in the RB unit. Furthermore, an office-Internet virtual switch for external devices, three phone-network virtual switches, and a network management virtual switch are created on a two physical two PCI ported IBM server. The Phones and WiFi WAP have their private network with IP address $192.168.2.1$.}\label{fig:VM_TestBed}
\end{figure}

For the test platform in Figure \ref{fig:VM_TestBed}, the small number of phones working in a high throughput WiFi network is an object of concern as it provides too ideal of a data transfer environment to be able to appropriately illustrate the benefits that may emerge from an RB-induced traffic shaping, i.e. from the resource allocation method in Part I. To observe the traffic shaping effects, we should impose a higher load on the network. We do this simply by restricting the overall network bandwidth to $1$ Mega bits per second (Mbps). In order to make a comparison between the scenarios with and without the resource allocation algorithm of Part I, we first apply the rate assignment scheme to the network in Figure \ref{fig:VM_TestBed} under no bandwidth constraints; then, we introduce an overall network constraint with no traffic shaping, and ultimately contrast non-shaped throughput and QoE observation to a situation where the RB operates under a $1$ Mbps network constraint. The Youtube and HTTP 
applications throughput under neither
network constraints nor the resource allocation application is shown in Table \ref{table:NCnsNShp}, where a low/high rate YouTube application "YouTube 1"/"YouTube 2" and a small/large file download application "HTTP 1"/"HTTP 2" (obtained from content providers of the VM-created file server) run on three UEs in the network whose speed under the absence of applications is measured at $32$ Mbps. For instance, the $1^{st}$ scenario in Table \ref{table:NCnsNShp} indicates that two phones run YouTube 1 and the average bandwidth consumption is $R_{avg} = 3.492$ Mbps, while the single-UE $3^{rd}$ scenario incurs a much lower bandwidth consumption $R_{avg} = 0.951$ Mbps due to the low rate YouTube 1 application. On the other hand, the bandwidth consumption significantly increased to $R_{avg} = 2.11$ Mbps due to the single-UE high rate YouTube 2 application in the $4^{th}$ scenario, and a similar rate rise is apparent in the transition from the $5^{th}$ scenario to the $6^{th}$ one where the high rate HTTP 2 replaces 
the low rate HTTP 1 at the $2^{nd}$ phone amounting to $24.866$ Mbps bandwidth consumption. The last scenario is concomitant with a lower
rate $R_{avg} = 13.747$ Mbps as opposed to the $6^{th}$ scenario ($R_{avg} = 24.866$ Mbps) in spite of augmenting a YouTube 1 application to phone 3 in the latter case. This can be explained by the need to transfer more bits which happen over a longer time interval vis-a-vis that of the $6^{th}$ case, thereby the throughput slashes down relative to the $6^{th}$ configuration.

\begin {table}[]
\caption {Network Throughput in Mbps for the Network in Figure \ref{fig:VM_TestBed} under neither Bandwidth Constraints nor Traffic Shaping.}
\label{table:NCnsNShp}
\begin{center}
\renewcommand{\arraystretch}{1.4}
\begin{tabular}{| l || l | l | l | l |}
  \hline
  Scenario & Phone 1   & Phone 2 & Phone 3   & $R_{avg}$ Mbps \\  \hline\hline
  1 & Youtube  1 & Youtube 2 &   -       & 3.492   \\ \hline
  2 & Youtube 1 & HTTP 1    &   -       & 4.050   \\ \hline
  3 & Youtube 1 & -       &   -       & 0.951   \\ \hline
  4 & Youtube 2 & -       &   -       & 2.11    \\ \hline
  5 & HTTP 1    & HTTP 1  &   -       & 4.262   \\ \hline
  6 & HTTP 1    & HTTP 2  &   -       & 24.866  \\ \hline
  7 & HTTP 1    & HTTP 1  & Youtube 1 & 13.747 \\ \hline
  \multicolumn{5}{|c|}{Network Speed with neither Application nor Rate Constraints: $32$ Mbps} \\  \hline
\end{tabular}
\end{center}
\end {table}

To see the effect of the resource allocation developed in Part I and implemented in the RB entity, we throttle the overall network bandwidth to $R = 1$ Mbps. Without loss of generality, we focus on deploying only two phones in the network in Figure \ref{fig:VM_TestBed} with IP addresses and a YouTube 1 and HTTP 1 application as illustrated in Table \ref{table:PhnIP}. Running the experiment with no resource allocation applied, we observe that Youtube 1 incurred multiple buffering periods indicating a poor QoE from its user's perspective, the overall average bandwidth usage was $0.963$ Mbps, and HTTP 1 download completed in $1200$ seconds (s) (\ref{table:Performance}). Using WireShark \cite{Chappell2010}, we obtain the traces in Figure \ref{fig:NoCsNoSh}, where both applications sharing the total $1$ Mbps bandwidth annotated on the black curve alternate in bursty transmission intervals. In particular, at certain times, the HTTP 1 application (the red curve) utilizes the entire available bandwidth, shown by the 
red curve reaching the black curve, which simultaneously zeros the Youtube 1 throughput illustrated by the green curve lowering to the abscissa (time axis). This behavior adversely affects the QoE for the UE 1.

\begin {table}[]
\caption {Smartphone Applications under $R = 1$ Mbps Network Constraint with/without Traffic Shaping.}
\label{table:PhnIP}
\begin{center}
\renewcommand{\arraystretch}{1.4}
\begin{tabular}{| l || l | l | l |}
  \hline
  Phone & IP & Traffic Type & Application \\  \hline \hline
  UE 1 & $192.168.2.57$ & Inelastic & YouTube 1\\ \hline
  UE 2 & $192.168.2.98$ & Elastic & HTTP 1\\ \hline
\end{tabular}
\end{center}
\end {table}

\begin{figure}[!htb]
\begin{center}
\includegraphics[width=3.5in]{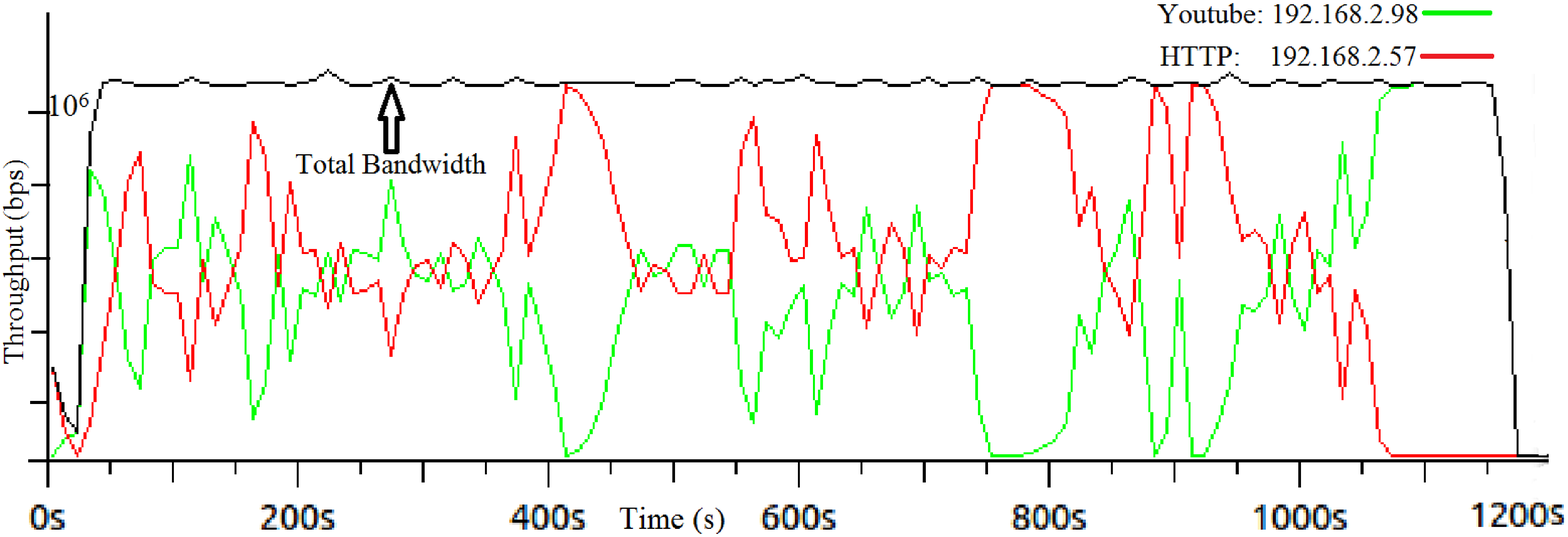}
\end{center}
\caption{Wireshark Throughput Analysis without Applying the Resource Allocation in Part I: For the network in Figure \ref{fig:VM_TestBed} with the parameters in Table \ref{table:PhnIP}  under an overall bandwidth constraint $R = 1$ Mbps and no traffic shaping, black/green/red curve shows the network/YouTube 1/HTTP 1 throughput. HTTP 1 downloaded in $1200$ s with YouTube 1 incurring multiple buffering, adversely affecting its QoE as HTTP 1 (red curve) uses the entire available capacity (hits the black curve).}\label{fig:NoCsNoSh}
\end{figure}

The same situation of the network in Figure \ref{fig:VM_TestBed} with the parameters in Table \ref{table:PhnIP} is repeated when the resource allocation method in Part I is applied to have the router shape the applications traffic by optimally assigning them rates based on their requirements. To apply the resource allocation method developed in Part I, the phones UE 1 and UE 2 register with the RB entity, where the rate allocation code runs and calculates rates to be enforced at the router. The average bit rates for the YouTube 1 and HTTP 1 are respectively $731$ and $267$ kbps, the convergence time for the algorithm is measured at $528$ ms, and the overall throughput becomes $0.758$ Mbps which is less than the maximum $1$ Mbps available capacity due to periods over which no YouTube traffic load is present on the network. Using Wireshark, the rates of the YouTube 1 and HTTP 1 applications when the resource allocation is leveraged in the network under overall $R = 1$ Mbps constraint are depicted in Figure \
ref{fig:NoCsSh}.

Here, the black curve indicates the overall available bandwidth and the ordinate (throughput axis) shows that the network bandwidth is confined to $1$ Mbps. As we can see, YouTube 1 (green curve) consumes more resources per the rate allocation interval than does the HTTP 1 application, whose download time expectedly takes longer to be completed at $2650$ s ((\ref{table:Performance})). Interestingly, there are intervals where YouTube 1 rate becomes $0$, over which HTTP 1 obtains more bandwidth; this zero-grounding behavior is analogously observed for the HTTP 1 download. In this experiment, we observe no YouTube buffering occurrences, thereby it provisions a better video watching experience for the user as opposed to the unshaped traffic scenario depicted in Figure \ref{fig:NoCsNoSh}. Such a dearth of buffering speaks directly to the speculated QoE in that the real-time YouTube 1 application is provided with a consistent rate assignment such that it is able to fill the buffer and does not require any more 
bandwidth usage.

\begin{figure}[!htb]
\begin{center}
\includegraphics[width=3.5in]{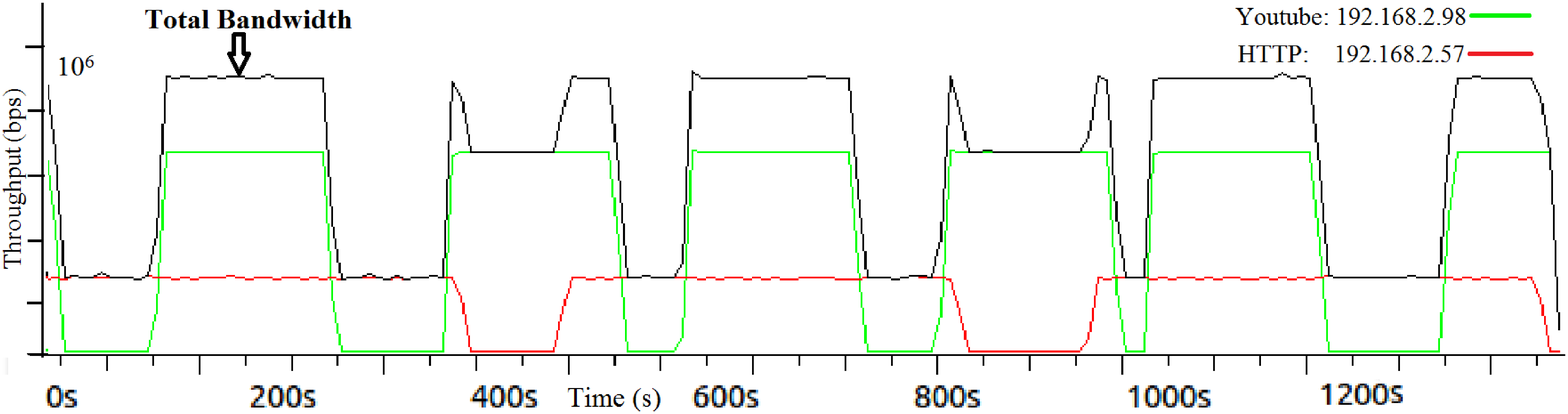}
\end{center}
\caption{Wireshark Throughput Analysis with Applying the Resource Allocation in Part I: For the network in Figure \ref{fig:VM_TestBed} with the parameters in Table  \ref{table:PhnIP} under bandwidth $R = 1$ Mbps and shaping, black/green/red curve shows the network/YouTube 1/HTTP 1 throughput. HTTP 1 downloaded in $2650$ s with no YouTube 1 buffering occurances. Occasionally, one application rate zeros and the other application utilizes the maximum bandwidth to achieve a consistent allocation to elevate users' QoE.}\label{fig:NoCsSh}
\end{figure}

As we observed in this real-world implementation of the resource allocation algorithm presented in Part I of this paper, the resource allocation elevates the QoE despite the fact that less resources are consumed. This directly helps communications carriers to reduce their operation expenditure (OPEX) and customer churn. As a case in point, the algorithm-induced traffic shaping decreased the bandwidth consumption from $0.963$ Mbps to $0.758$ Mbps, thereby a $0.205$ Mbps reduction of resource consumption is stemmed which has desirable monetary indications by using up less resources without degrading users' QoE. As such, an QoE remedial measurement is implied in light of employing the developed resource allocation scheme.

\begin {table}[]
\caption {Bandwidth Consumption and Download Time.}
\label{table:Performance}
\begin{center}
\renewcommand{\arraystretch}{1.4}
\begin{tabular}{| l || l | l |}
  \hline
  Performance & Shaped Traffic & Unshaped Traffic \\  \hline \hline
  HTTP 1 Download Time (s) & $1200$ & $2650$\\ \hline
  Youtube 1 Buffering & No & Yes\\ \hline
  Total Bandwidth (kbps) & $758$ & $951$\\ \hline
\end{tabular}
\end{center}
\end {table}

\section{Conclusion}\label{sec:conclude}
In part I of this paper, we introduced a novel QoS-minded centralized and a distributed algorithm for the resource allocation within the cells of cellular communications systems. We formulated the centralized and distributed approaches as respectively a singular and double utility proportional fairness optimization problems, where the former allocated running applications rates directly by the allocation entity such as an eNB in response to the UE utility parameters sent, whereas the latter assigned the UE rates by the eNB in its first stage followed by the application rate allocation by the UEs in its second stage. Users ran both delay-tolerant and real-time applications mathematically modelled correspondingly as logarithmic and sigmoidal utility functions, where the function values represented the applications QoS percentage. Both of the proposed resource allocation formulations incorporated the service differentiation, application status differentiation modelling the applications usage percentage, and 
subscriber differentiations amongst subscribers priority within networks into their formulation. Not only did we prove that the proposed resource allocation problems were convex and solved them through Lagrangian of their dual problems, but also we proved the optimality of the rate assignments and the mathematical equivalence of the proposed distributed and centralized resource allocation schemes.

Furthermore, we analyzed the algorithm convergence under varying sums of resources available to the eNB and introduced robustness into the distributed algorithm by incorporating decay functions into the aforementioned algorithm so that it converged to optimal rates for both high and low traffic loads occurring during the day by damping the rate assignment fluctuations resulting from scarcity of resources that appeared particularly in peak-traffic circumstances.

In part II, we proved that the centralized resource allocation algorithm incurs the minimum possible transmission overhead during the rate assignment procedure. Moreover, we derived the transmission overhead based on changes in the number of UEs in the cell under the existence of the dearth a rebidding policy. We proved that the absence of a rebidding policy entails a transmission overhead less that or equal to that of the one with presence of a rebidding policy. In addition, we proved that the rate assignments remain optimal for variations in the network's UE quantity for the distributed resource allocation approach under a rebidding policy, whilst the rates are not optimal when the number of UEs change for the distributed approach with a no-rebidding policy and also for the centralized resource allocation scheme. Besides, we proved the rate assignments remain optimal when the application usage percentages changes in case of the centralized resource allocation scheme, but it is not optimal when the 
aforesaid parameter varies in case of the distributed resource allocation method. We also analyzed the transmission overhead and its changes for both the centralized and distributed method when the application usage percentages vary in the network.

Furthermore, we provisioned a traffic-dependent pricing technique for network providers using which they could flatten the traffic loads during peak traffic hours. The devised pricing could be as well employed by network subscribers to potentially decrease the cost of utilizing the network by using services at low-cost (low-traffic) time periods. Ultimately, through simulations, we showed that the rate assignment, bid pledging, and pricing errors for the centralized approach is $0$ while those of the distributed approach converge to their steady state values different from the optimal ones.

Ultimately, We set up a real-world network to evaluate the QoE users perceive when the procedure is implemented in practice. The large-scale network configuration included Youtube and HTTP applications on UEs connected to the Internet through their WAP and a router which ran the resource allocation routine and enforced the rates. We realized that the absence of the resource allocation algorithm in the network caused multiple buffering instances of the real-time Youtube application, thereby the users QoE was adversely undermined. On the other hand, the presence of the algorithm, through which the rates were assigned to the applications and enforced at the gateway router, eliminated any Youtube buffering at the expense of lengthening the duration of delay-tolerant download applications. Therefore, applying the resource allocation method significantly escalated the users QoE without hurting QoS requirements of applications. Finally, we realized that despite the dearth of Youtube buffering under the presence of 
the resource allocation algorithm,
the applications consumed less resources as opposed to the QoE-hurting algorithm-absent scenario. Consequently, the bandwidth conservation yields in a lower OPEX for the network as less to-be-paid resources are consumed.

\bibliographystyle{ieeetr}
\bibliography{pubs}

\begin{thebibliography}{10}

\bibitem{GhorbanzadehCNC2015_1}
M.~Ghorbanzadeh, A.~Abdelhadi, A.~Amanna, J.~Dwyer, and T.~Clancy,
  ``Implementing an optimal rate allocation tuned to the user quality of
  experience,'' in {\em IEEE IInternational Conference on Computing, Networking
  and Communications (ICNC) Workshop CCS}, 2015.

\bibitem{Lee05non-convexoptimization}
J.~Lee, R.~Mazumdar, and N.~Shroff, ``Non-convex optimization and rate control
  for multi-class services in the internet,'' {\em IEEE/ACM Transactions on
  Networking}, 2005.

\bibitem{AbdelhadiCNC2014}
A.~AbdelHadi and C.~Clancy, ``{A Utility Proportional Fairness Approach for
  Resource Allocation in 4G-LTE},'' in {\em IEEE International Conference on
  Computing, Networking, and Communications (ICNC), CNC Workshop}, 2014.

\bibitem{AbdelhadiPIMRC2013}
A.~Abdelhadi and C.~Clancy, ``{A Robust Optimal Rate Allocation Algorithm and
  Pricing Policy for Hybrid Traffic in 4G-LTE},'' in {\em IEEE nternational
  Symposium on Personal, Indoor, and Mobile Radio Communications (PIMRC)},
  2013.

\bibitem{AbdelhadiMobicom2013}
A.~Abdelhadi, C.~Clancy, and J.~Mitola, ``A resource allocation algorithm for
  users with multiple applications in 4g-lte,'' in {\em ACM Workshop on
  Cognitive Radio Architectures for Broadband (MobiCom Workshop CRAB)}, 2013.

\bibitem{ShajaiahICNC2014}
H.~Shajaiah, A.~Abdelhadi, and C.~Clancy, ``Spectrum sharing between public
  safety and commercial users in 4g-lte,'' in {\em IEEE International
  Conference on Computing, Networking and Communications (ICNC)}, 2014.

\bibitem{ShajaiahMILCOM2013}
H.~Shajaiah, A.~Abdelhadi, and T.~Clancy, ``Utility proportional fairness
  resource allocation with carrier aggregation in 4g-lte,'' in {\em IEEE
  Military Communications Conference (MILCOM)}, 2013.

\bibitem{ShajaiahPIMRC2014}
H.~Shajaiah, A.~Abdelhadi, and C.~Clancy, ``Multi-application resource
  allocation with users discrimination in cellular networks,'' in {\em IEEE
  nternational Symposium on Personal, Indoor and Mobile Radio Communications
  (PIMRC)}, 2014.

\bibitem{ShajaiahCCS2014}
H.~Shajaiah, A.~Abdelhadi, and C.~Clancy, ``Utility proportional fairness
  resource allocation with carrier aggregation in 4g-lte,'' in {\em IEEE
  International Conference on Computing, Networking and Communications (ICNC)
  Worshop CCS)}, 2015.

\bibitem{GhorbanzadehPart2}
A.~Abdelhadi, M.~Ghorbanzadeh, and C.~Clany, ``Optimal radio resource
  allocation for hybrid traffic in cellular networks, part i: Centralized and
  distributed architecture,'' in {\em UnderSubmission}.

\bibitem{Erpek2015}
T.~Erpek, A.~Abdelhadi, and C.~Clancy, ``An optimal application-aware resource
  block scheduling in lte,'' in {\em IEEE International Conference on
  Computing, Networking and Communications (ICNC) Worshop CCS)}, 2015.

\bibitem{GhorbanzadehICNC2015}
M.~Ghorbanzadeh, A.~AbdelHadi, and C.~Clancy, ``A utility proportional fairness
  approach for resource block allocation in cellular networks,'' in {\em IEEE
  IInternational Conference on Computing, Networking and Communications
  (ICNC)}, 2015.

\bibitem{GhorbanzadehMILCOM2014}
M.~Ghorbanzadeh, A.~AbdelHadi, and C.~Clancy, ``A utility proportional fairness
  bandwidth allocation in radr-coexistent cellular networks,'' in {\em Military
  Communications Conference (MILCOM)}, 2014.

\bibitem{Shenker95fundamentaldesign}
S.~Shenker, ``Fundamental design issues for the future internet,'' {\em IEEE
  Journal on Selected Areas in Communications}, 1995.

\bibitem{DL_PowerAllocation}
J.~Lee, R.~Mazumdar, and N.~Shroff, ``Downlink power allocation for multi-class
  wireless systems,'' {\em IEEE/ACM Transactions in Networking}, 2005.

\bibitem{IBM}
``Ibm x3250 m4,'' in {\em IBM Redbooks Product Guide}, 2011.

\bibitem{Chappell2010}
L.~Chappell and G.~Combs, ``Wireshark network analysis: The official wireshark
  certified network analyst study guide,'' in {\em IProtocol Analysis
  Institute, Chappell University}, 2010.

\end{thebibliography}
\end{document}